\renewcommand{\paragraph}[1]{\smallskip\noindent\textit{#1}}
\renewcommand{\marginnote}[1]{}
\newcommand{\demersabbrev}{DC\xspace}
\newcommand{\demersabbrevs}{DCs\xspace}
\begin{document}
\title{Computing Stable Demers Cartograms%
\thanks{This research was initiated at NII Shonan Meeting 127 ``Reimagining the Mental Map and Drawing Stability''. 
M.~Sondag is supported by The Netherlands Organisation for Scientific Research (NWO) under project no.~639.023.20. S.~Kobourov is supported by NSF grants CCF-1740858, CCF-1712119
and DMS-1839274.}%
}
%
%

\author{Soeren Nickel\inst{1}\and
Max Sondag\inst{2} \and
Wouter Meulemans\inst{2} \and
Markus Chimani\inst{3} \and
Stephen Kobourov\inst{4} \and
Jaakko Peltonen\inst{5}\and
Martin Nöllenburg\inst{1}}

\authorrunning{S. Nickel et al.}
%
\institute{TU Wien, Austria  \email{\{soeren.nickel@,noellenburg@ac.\}tuwien.ac.at} \and
TU Eindhoven, Netherlands \email{\{m.f.m.sondag,w.meulemans\}@tue.nl} \and
University of Osnabrück, Germany \email{markus.chimani@uni-onsabrueck.de} \and
University of Arizona, USA \email{kobourov@cs.arizona.edu} \and
Tampere University, Finland \email{jaakko.peltonen@tuni.fi}}
\maketitle              
\begin{abstract}
Cartograms are popular for visualizing numerical data for map regions.
Maintaining correct adjacencies is a primary quality criterion for cartograms. 
When there are multiple data values per region (over time or different datasets) shown as animated or juxtaposed cartograms, preserving the viewer's mental-map in terms of stability between cartograms is another important criterion. 
We present a method to compute stable Demers cartograms, where each region is shown as a square and similar data yield similar cartograms. 
We enforce orthogonal separation constraints with linear programming, and measure quality in terms of keeping adjacent regions close (cartogram quality) and using similar positions for a region between the different data values (stability). 
Our method guarantees ability to connect most lost adjacencies with minimal leaders.
Experiments show our method yields good quality and stability. 

\keywords{Time-varying data \and Cartograms \and Mental-map preservation}
\end{abstract}

\section{Introduction}

Myriad datasets are georeferenced and relate to specific places or regions. A natural way to visualize such data in their spatial context is by cartographic maps.
A choropleth map is a prominent tool, which colors each region in a map by its data value. 
Such maps have several drawbacks: data may not be correlated to region size and hence the visual salience of large vs small regions is not equal.
Moreover, colors are difficult to compare and not the most effective encoding for numeric data~\cite{munzner2014visualization}, requiring a legend to facilitate relative assessment.

Cartograms, also called value-by-area maps, overcome the drawbacks by reducing spatial precision in favor of clearer encoding of data values: 
the map is deformed such that each region's visual size is proportional to its data value. 
Attention is then drawn to items with large data values and comparison of relative magnitudes becomes a task of estimating sizes -- which relies on more accurate visual variables for numeric data~\cite{munzner2014visualization}.
This also frees up color as a visual variable. 
Cartogram quality is assessed by 
criteria~\cite{nusrat2016state} including
\textbf{1. Spatial deformation:} regions should be placed close to their geographic position;
\textbf{2. Shape deformation:} each region should resemble its geographic shape;
\textbf{3. Preservation of relative directions:} spatial relations such as north-south and east-west should be maintained.
\textbf{4. Topological accuracy:} geographically adjacent regions should be adjacent in the cartogram, and vice versa.
\textbf{5. Cartographic error:} relative region sizes should be close to the data values. 
Criteria 1-4 describe geographical accuracy of the region arrangement. 
Maintaining relative directions also helps preserve a viewer's spatial mental model~\cite{tversky1993cognitive}
Criterion 5 (also called statistical error) captures how well data values are represented. Often techniques aim at zero cartographic error sacrificing other criteria.

Cartograms can also be effective for showing different datasets of the same regions, arising from time-varying data such as yearly censuses yielding temporally ordered values for each region, or from available measurements of different demographic variables that we want to explore, compare and relate, yielding a vector or set of values for each region. 
Visualizations for multiple cartograms include animations (especially for time series), small multiples showing a matrix of cartograms, or letting a user interactively switch the mapped value in one cartogram.
See for example the interactive Demers cartogram accompanying an article from the New York Times\footnote{\url{https://archive.nytimes.com/www.nytimes.com/interactive/2008/09/04/business/20080907-metrics-graphic.html}, accessed June 2019.}.
In such methods, cartograms should be as similar as the data values allow: we thus want cartograms to be \emph{stable} by using similar layouts.
This helps retain the viewer's mental map\cite{misue1995layout}, supporting linking and tracking across cartograms. 
Thus, we obtain an important criterion with multivariate or time-varying data.
\textbf{Stability}: for high stability, cartograms for the same regions using different data values should have similar layouts.
The relative importance of the criteria depends on the tasks to be facilitated. 
Nusrat and Kobourov's taxonomy of ten tasks~\cite{nusrat2016state} can also be considered with multiple cartograms.
Many tasks focus on the data values. 
As such, a representation of a region of low complexity allows for easier estimation and size comparison.


\paragraph{Contribution.}
We focus on \emph{Demers cartograms} (\demersabbrev; \cite{Demers}) which represent each region by a suitably
sized square, similar to Dorling cartograms~\cite{dorling96} which use circles.
Their simplicity allows easy comparison of data values, since aspect ratio is no longer a factor, unlike, e.g., for rectangular cartograms~\cite{ks07}.
However, as abstract squares incur shape deformation, in spatial recognition tasks the cartogram embedding as a whole must be informative, so the layout must optimize as much as possible the other geographic criteria: \emph{spatial deformation}, \emph{preservation of relative directions} and \emph{topological accuracy}.
We contribute an efficient linear programming algorithm to compute high-quality stable \demersabbrevs.
Our \demersabbrevs have no cartographic error, satisfy given constraints on spatial relations, and allow trade-off between topological error and stability. 
Linear interpolation between different \demersabbrevs yields no overlap during transformation. 
Lost adjacencies--satisfying a mild assumption--can be shown as minimal-length planar orthogonal lines.
Fig.~\ref{fig:teaser} shows examples.
Experiments compare settings of our linear program to each other and to a force-directed layout we introduce (also novel for \demersabbrevs); results show that our linear program efficiently computes stable \demersabbrevs.

\begin{figure}[t]
  \centering

  \includegraphics[width=0.24\linewidth]{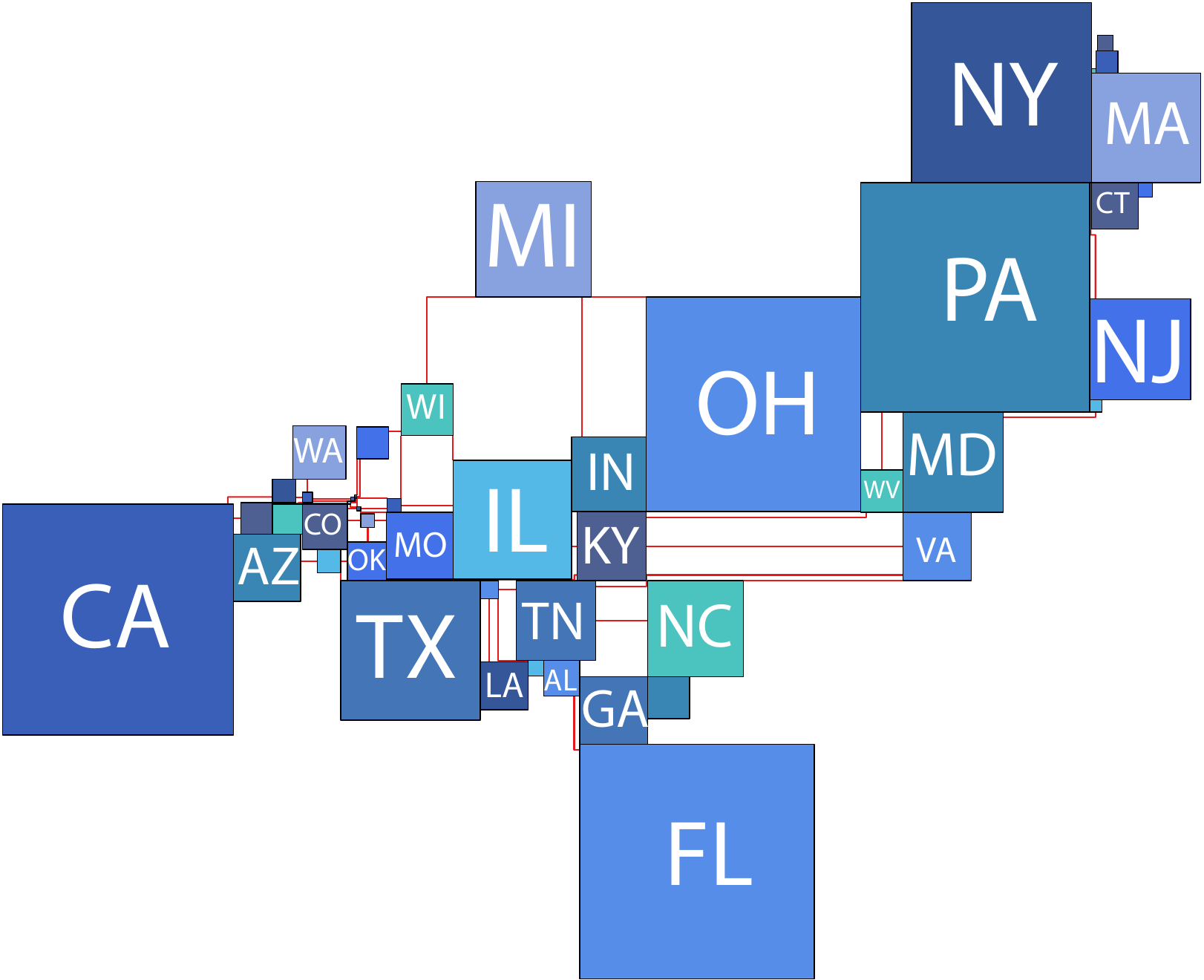}
  \includegraphics[width=0.24\linewidth]{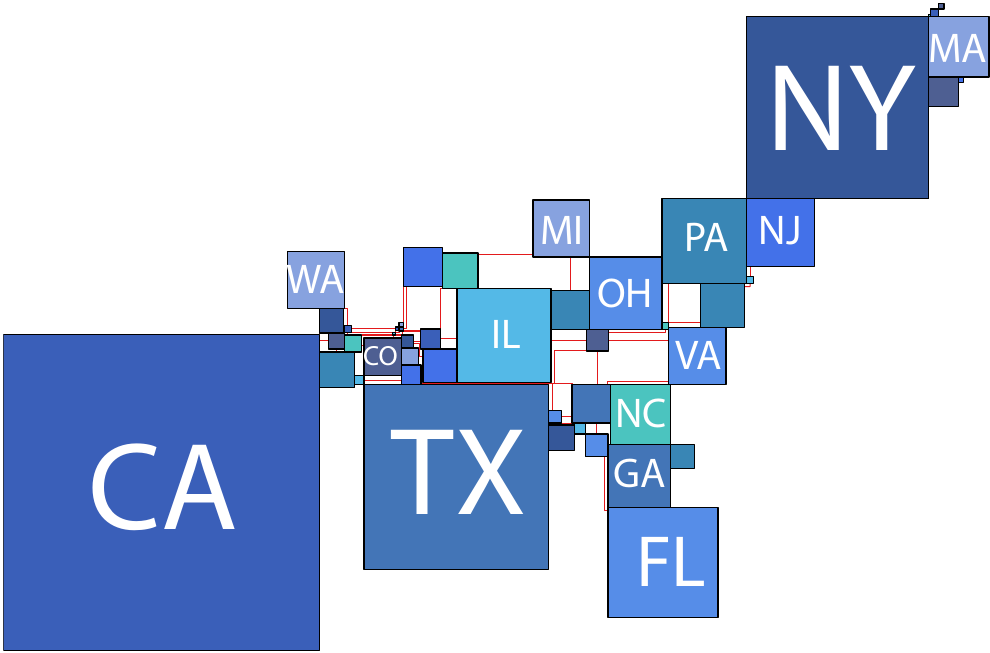}
  \includegraphics[width=0.24\linewidth]{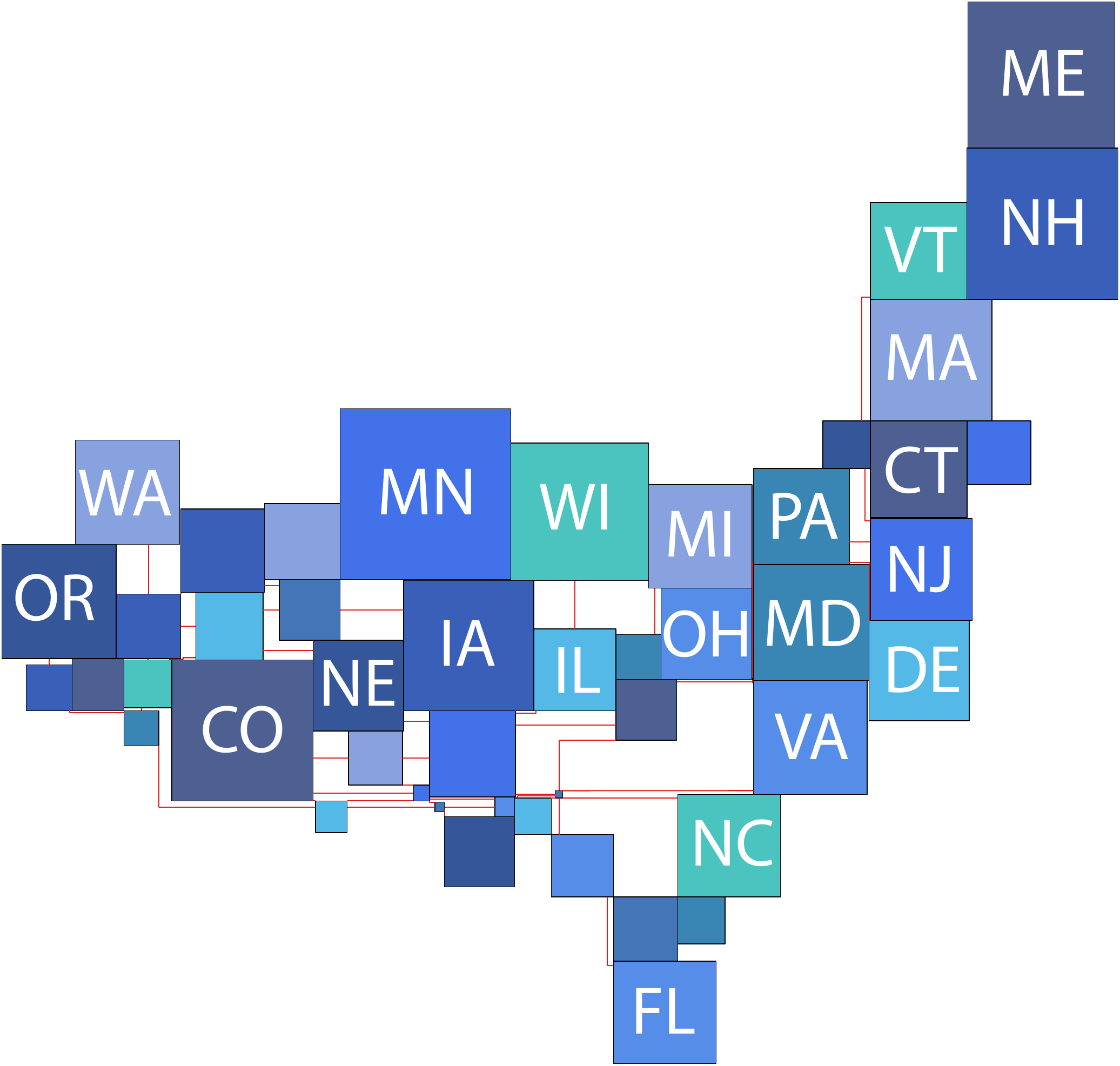}
  \includegraphics[width=0.24\linewidth]{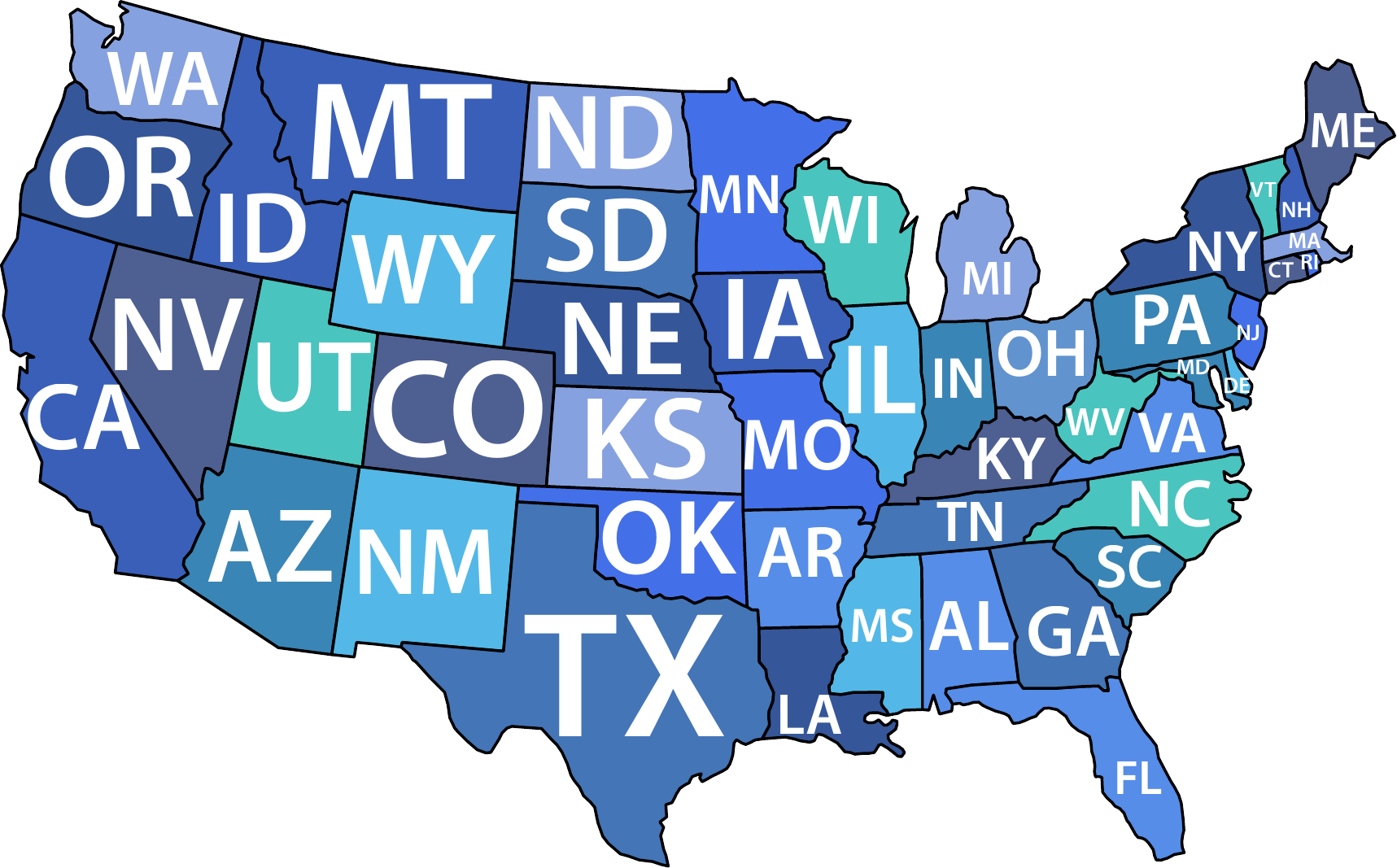}
  \caption{Cartograms displaying drug poisoning mortality, total GDP and population of the contiguous states of the US in 2016. The layout minimizes distance between adjacent regions. Lost adjacencies are indicated with red leaders. Color is used only to facilitate correspondences between the cartograms.}
	\label{fig:teaser}
\end{figure}	

\paragraph{Related work.}
Cartogram-like representations date to the 1800s. In the 1900s most standard cartogram types were defined, including rectangular value-by-area cartograms 
\cite{Raisz34} 
and more recent ones 
\cite{hkps04,ks07}.
The first automatically generated cartograms are continuous deformation ones 
\cite{Tobler73} 
followed by others 
\cite{HK98,GN04}.
Dorling cartograms~\cite{dorling96} and \demersabbrevs~\cite{Demers} exemplify the non-contiguous type representing regions by circles and squares respectively. 
Layouts representing regions by rectangles and rectilinear polygons have received much attention in algorithmic literature, see e.g. \cite{alam2013computing,buchin2012evolution,eppstein2009area}, 
and typically focus on aspect ratio, topological error and region complexity.
Compared to \demersabbrevs, rectilinear variants have higher visual complexity and added difficulty to assessing areas.
No cartogram type can guarantee a both statistically and geographically accurate representation; see a recent survey~\cite{nusrat2016state}. 
Measures exist to evaluate quality of cartogram types and algorithms, see e.g. \cite{alam2015quantitative,keim2004cartodraw}.

There is little work on evaluating or computing stable cartograms for time-varying or multivariate data. Yet they are used in such manner, e.g., as a sequence of contiguous cartograms showing the evolution of the Internet \cite{johnson2015analyzing}. 

\demersabbrevs relate to contact representations, encoding adjacencies between neighboring regions as touching squares. The focus in graph theory and graph drawing literature lies on recognizing which graphs can be perfectly represented. Even the unit-disk case is NP-hard 
\cite{breu1998unit}, 
though efficient algorithms exist for some restricted graph classes \cite{di2015low}. Klemz et al.~\cite{klemz2015recognizing} consider a vertex-weighted variant using disks, that is, with varying disk sizes. 
Various other techniques are similar to \demersabbrevs, using squares or rectangles for geospatial information. Examples include grid maps, see, e.g., \cite{eppstein} for algorithms and \cite{smwg} for computational experiments.
%
Recently, Meulemans \cite{diamondoverlap} introduced a constraint program to compute optimal solutions under orthogonal order constraints for diamond-shaped symbols.
We use similar techniques, but refer to Section~\ref{sec:single} for a discussion of the differences.



\section{Computing a single \demersabbrev}
\label{sec:single}

First, we consider a \demersabbrev\ for a single weight vector. 
We are given a set of weighted regions with their adjacencies, and a set of directional relations. 
We compute a layout realizing the weights with disjoint squares that may touch only if adjacent, so that directional relations are ``roughly'' maintained. 
We quantify the quality of the layout by considering the distances between any two squares representing adjacent regions. 
We show that the problem, under appropriate distance measures, can be solved via linear programming in polynomial time.

\paragraph{Formal setting.}
We are given an input graph $G=(R,T)$. For each \emph{region} $r\in R$ we are given its centroid in $\mathbb{R}^2$ and its weight $w(r)$, the side length of the square that represents it in output.
The graph has an edge in $T$ if and only if the original regions are adjacent, thus their respective squares in the output \emph{should} be adjacent as well.
We are also given two sets $H$, $V$ of ordered region pairs. 
A pair $(r,r')$ is in $H$, if $r$ should be horizontally separated from $r'$ such that there exists a vertical line $\ell$ with the square of $r$ being left of $\ell$ and $r'$ to its right. 
Analogously, $V$ encodes vertical separation requirements.
If $r$ and $r'$ are adjacent, then $(r,r')$ is either in $H$ or in $V$ (but not in $H \cap V$) and they \emph{should} touch $\ell$, otherwise we require a strict separation to avoid false adjacencies; 
we are given a minimum gap $\varepsilon$ to ensure that this non-adjacency can be visually recognized.\footnote{In the implementation, $\varepsilon$ is the minimum of the side length of the smallest region and 5\% of the diagonal of the bounding box of the input regions~$R$.} 
The sets $H$ and $V$ model the relative directions criterion for \demersabbrevs and any two regions are paired in at least one of those sets. 
To ensure a \demersabbrev exists satisfying the separation constraints, the directed graph $D = (R, H \cup V)$ must be a directed \emph{acyclic} graph (DAG).
We consider these relations transitive: if $(r,r') \in H$ and $(r',r'') \in H$, then this enforces that there exists a vertical line separating $(r,r'')$ in any \demersabbrev and thus $(r,r'')$ is in $H$.

The output---a placement of a square for each region---can be stored as a point $P \colon {R} \rightarrow \mathbb{R}^2$ for each region, encoding the center of its square.
A placement $P$ is \emph{valid}, if it satisfies the separation constraints of $H$ and $V$. 
This implies all squares are pairwise interior disjoint (or fully disjoint for nonadjacent regions).
We look for a valid placement where distances between non-touching squares of originally adjacent regions are minimized; this will be made more precise below.

\paragraph{Deriving separation constraints.} 
The regions' weights are given and their adjacencies and centroids easily derived, but  separation constraints $H$ and $V$ are not. 
Various models can determine good directions or separation constraints~\cite{bkssv-slmdr-11}. 
We use the following model; it is symmetric and ensures constraints 
form a DAG.

For two regions $(r,r')$ represented by centroids, we check whether their horizontal or vertical distance is larger. 
In the former case, we add $(r,r')$ to $H$ if $r$ is left of $r'$ and $(r',r)$ to $H$ otherwise. 
In the latter case, we add the pair to $V$ in the appropriate order. 
We call this the \emph{weak setting}.
We call constraints added in this setting \emph{primary separation constraints}.

In the \emph{strong setting}, we may add an extra constraint for nonadjacent region pairs whose bounding boxes admit both horizontal and vertical separating lines: 
if a pair has a primary separation constraint in $H$ or $V$, we add a \emph{secondary} separation constraint to $V$ or $H$ respectively.

\paragraph{Linear Program.}
We model optimal solutions to the problem via a polynomially-sized linear program (LP), which lets us solve the problem in polynomial time.
For each $r\in R$, we introduce variables $x_r$ and $y_r$ for the center $P(r) = (x_r, y_r)$ of the square.
For any originally adjacent regions $\{r,r'\}\in T$ we introduce variables $h_{r,r'}$ and $v_{r,r'}$ for the (non-negative) distance between two squares.
For any two regions $r,r'$, we define shorthands: let $w_{r,r'}:=\frac{(w(r)+w(r'))}{2}$ and let $\mathit{gap}_{r,r'}=\varepsilon$ if $\{r,r'\}\not\in T$, and $0$ otherwise.
\begin{align}
    &\min \sum_{\{r,r'\} \in {T}} h_{r,r'} + v_{r,r'} && \label{eq:objective}\\
    &x_{r'} - x_{r} \geq w_{r,r'} + \mathit{gap}_{r,r'} && \forall (r,r')\in H \label{eq:sepX}\\
    &y_{r'} - y_{r} \geq w_{r,r'} + \mathit{gap}_{r,r'} && \forall (r,r')\in V \label{eq:sepY}\\
    &h_{r,r'} \geq \max\{ (x_r - x_{r'}) - w_{r,r'}, (x_{r'} - x_r) - w_{r,r'} \} && \forall \{r,r'\}\in T \label{eq:measureX}\\
     &v_{r,r'} \geq \max\{ (y_r - y_{r'}) - w_{r,r'}, (y_{r'} - y_r) - w_{r,r'} \} && \forall \{r,r'\}\in T \label{eq:measureY}\\
    &h_{r,r'}, v_{r,r'} \geq 0 && \forall \{r,r'\}\in T \label{eq:measureNonZero}
\end{align}
%
%
The objective \eqref{eq:objective} minimizes a sum of the distances between regions with broken adjacencies in the $L_1$ metric.
Constraints \eqref{eq:sepX} and \eqref{eq:sepY} ensure separation requirements by forcing square centers far enough apart.
For nonadjacent regions, the $\mathit{gap}$ function assures a recognizable gap of width $\varepsilon$ between resulting squares.
Constraints \eqref{eq:measureX}--\eqref{eq:measureNonZero} bind distance variables $h,v$ with positional variables $x,y$. 
Here \eqref{eq:measureX} and \eqref{eq:measureY} encode two linear constraints per line, one for each term in the  `$\max$' function.
As \eqref{eq:objective} minimizes the distances, it suffices to enforce lower bounds,  hence the `$\ge$' in the constraints. 
In an optimal solution, either one of the two versions, or the non-negativity constraint \eqref{eq:measureNonZero} will be satisfied with equality.

\paragraph{Improving the gaps.}
The above model has two minor flaws. 
\emph{First,} two squares `touch' even if they only do so at corners; we resolve this by adding $\varepsilon$ to the right-hand side of \eqref{eq:measureX} (or \eqref{eq:measureY}) for vertically (or horizontally, respectively) separated region pairs in~$T$.
This allows $h_{r,r'} = 0$ ($v_{r,r'}=0$), when squares share a segment at least $\varepsilon$ long.
\emph{Second,} in the strong setting the LP asks for a minimum gap $\varepsilon$ along both axes. This is not not needed for visual separation, so we remove the gap requirement from the secondary separation constraint.

\paragraph{Fine-tuning the optimization criteria.}
The LP minimizes a sum of distances between adjacent regions. 
Cartogram literature emphasizes counting lost adjacencies between regions, not the distance between them.
We prefer our measure since 
\emph{1)} there is a big difference if two neighboring countries are set apart by a small or large gap; \emph{2)} while the LP can be turned to an \emph{integer} linear program to count lost adjacencies, it greatly increases computational complexity---optimizing for adjacencies is typically NP-hard, e.g., for disks~\cite{breu1998unit,bdlrst-rscplrudct-15} 
or segments~\cite{h-cglsn-01}.

Our linear program typically admits several optimal solutions, due to translation invariance and since touching squares may slide freely along each other as long as they touch.
We introduce a \emph{secondary} term to the objective 
to nuance selection of better layouts, multiplied by a \emph{small} constant to not interfere with the original (primary) objective.
The secondary term optimizes preservation of relative directions between squares within the freedom of the optimal solution.

Consider regions $r$ and $r'$. 
W.l.o.g., assume their original centroids are horizontally farther apart than vertically, and $r$ is left of $r'$, so $(r,r')\in H$.
We compute a directional deviation $d_{rr'} = |(y_r + \alpha (x_{r'} - x_r)) - y_{r'}|$, where $\alpha$ is the (finite) slope of the ray from $r$ to $r'$ in the input graph $G$.
Similar to \eqref{eq:measureX}, the objective function will minimize $d_{rr'}$; we weigh this term more heavily for adjacent regions. 
We thus turn the above formula into two linear inequalities.

Alternatives exist for the secondary criterion: displacement from the original location helps find layouts maintaining many adjacencies for grid maps of equal-size squares \cite{eppstein,smwg}.
For each region we measure $L_1$ displacement from its \emph{origin} (centroid of the original region in the geographic map) to the square center $P(r)$. 

\paragraph{Comparison to overlap removal.}
A technique placing disjoint squares exists to remove overlap of diamond (45 degree rotated square) glyphs for spatial point data~\cite{diamondoverlap}, asking to minimally displace varying-size diamonds to remove all overlap, constrained to keep orthogonal order of their centers. 
Rotating the scenario to yield squares does not yield axis-parallel order constraints but ``diagonal'' ones, different from our strong setting.
A ``weak order constraints'' variant is mentioned, related to our LP in the weak setting, if we change our objective to one only optimizing displacement relative to original locations. 
Fig.~\ref{fig:feasibleRegions} shows similarities and differences considering the feasibility area between two regions.
Extensions in \cite{diamondoverlap} can be applied in our scenario, e.g., reducing actively considered separation constraints by removing transitive relations (``dominance'' in \cite{diamondoverlap}). 
Time-varying data is briefly considered in \cite{diamondoverlap}, only conceptualizing a trade-off between origin-displacement and stability for artificial data;
we discuss several optimization criteria, also focusing on adjacencies which are not considered in \cite{diamondoverlap},  use real-world data experiments, and compare to a baseline \demersabbrev implementation to move beyond the limits of linear programming.

\begin{figure}[t]
\begin{minipage}[t]{0.54\textwidth}
\includegraphics[width=\textwidth]{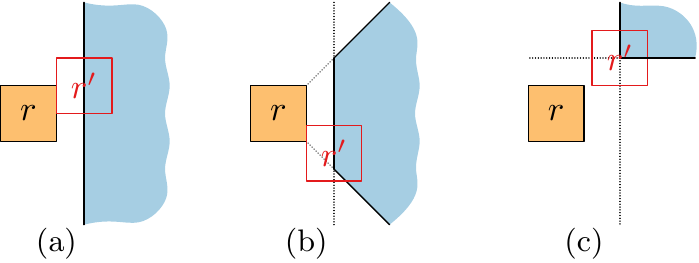}
\end{minipage}
\hfill
\begin{minipage}[t]{0.45\textwidth}
\vspace{-7.5\baselineskip}
\caption{Feasibility area where $r'$ may be placed w.r.t $r$, when $r'$ is primarily to the right of $r$. (a)~In terms of feasibility, our weak setting and weak order constraints in \cite{diamondoverlap} coincide. (b)~Feasibility using a rotated orthogonal order \cite{diamondoverlap}. (c)~Feasibility in the strong setting.}
    \label{fig:feasibleRegions}
\end{minipage}
\end{figure}

The lemma below matches an observation from \cite{diamondoverlap} that carries over to our setting. It implies that cartograms for different weight functions but with the same constraints have a smooth and simple transition between any such \demersabbrevs helping to retain the user's mental map.

\begin{restatable}{lemma}{niceinterpolations}
\label{lem:niceinterpolations}
Let $R$ be a set of regions with separation constraints $H$ and $V$. Let $A$ and $B$ be two \demersabbrevs for $R$, both satisfying $H$ and $V$. Then, any linear interpolation between $A$ to $B$ also satisfies $H$ and $V$ and is thus overlap-free.
\end{restatable}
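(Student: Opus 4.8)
The plan is to use that every separation constraint is a single linear inequality in the square centers and the (square) side lengths, so the set of valid placements is an intersection of half-spaces and therefore convex; a linear interpolation between two valid placements must stay inside it. Concretely, I would fix the two cartograms with centers $P_A(r)=(x^A_r,y^A_r)$, $P_B(r)=(x^B_r,y^B_r)$ and side lengths $w_A(r),w_B(r)$, and write the interpolation at parameter $\lambda\in[0,1]$ as the placement $C_\lambda$ in which region $r$ has center $(1-\lambda)P_A(r)+\lambda P_B(r)$ and side length $w_\lambda(r)=(1-\lambda)w_A(r)+\lambda w_B(r)$; that is, both positions and square sizes are interpolated linearly.

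Next I would check the constraints one at a time. Take a pair $(r,r')\in H$. Since $A$ is valid it satisfies $x^A_{r'}-x^A_r\ge \tfrac12\bigl(w_A(r)+w_A(r')\bigr)+\mathit{gap}_{r,r'}$, and $B$ satisfies the analogous inequality with its own weights; crucially $\mathit{gap}_{r,r'}\in\{0,\varepsilon\}$ is a constant that does not depend on the cartogram. Forming the convex combination of these two inequalities with coefficients $1-\lambda$ and $\lambda$, the left-hand side becomes exactly $x^{C_\lambda}_{r'}-x^{C_\lambda}_r$ and the right-hand side becomes $\tfrac12\bigl(w_\lambda(r)+w_\lambda(r')\bigr)+\mathit{gap}_{r,r'}$, because the right-hand side is affine in the weights. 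Hence $C_\lambda$ satisfies this constraint; the identical argument applies to every constraint of $H$ and, with the roles of $x$ and $y$ exchanged, to every constraint of $V$. So $C_\lambda$ satisfies $H$ and $V$ for all $\lambda\in[0,1]$.

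Finally I would invoke the fact recorded in the formal setting that a placement satisfying $H$ and $V$ is automatically overlap-free: every ordered pair of regions lies in $H$ or in $V$, and the corresponding separation inequality exhibits an axis-parallel line with the two squares on opposite sides (with a full $\varepsilon$-gap, hence genuine disjointness, for non-adjacent pairs). Applying this to $C_\lambda$ for each $\lambda\in[0,1]$ finishes the proof. There is no real obstacle here beyond bookkeeping; the only point that needs care is that $A$ and $B$ may carry different weight vectors, so one must verify that the constraint threshold — being affine in the weights — transforms correctly under interpolation, which is exactly why the side lengths must also be interpolated linearly.
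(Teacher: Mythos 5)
Your proof is correct: the separation constraints \eqref{eq:sepX}--\eqref{eq:sepY} are linear (indeed affine) in the centers and side lengths jointly, so the set of valid placements is convex and any convex combination of two valid placements (with side lengths interpolated alongside centers, as you rightly insist since $A$ and $B$ may realize different weight functions) remains valid; disjointness then follows because every region pair appears in $H$ or $V$. The paper itself omits a proof, deferring to an observation in the cited overlap-removal work, and your convexity argument is exactly the intended one.
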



\section{Computing stable \demersabbrevs for multiple weights}
\label{sec:timevar}

The method can be extended for regions having multiple weights. 
We are given a set of weight functions $W= \{w_1, \dots, w_k\}$. 
We aim to compute a \demersabbrev for each $w_i \in {W}$, i.e., positions $P_i(r)$ for each $r \in R$ and $w_i \in W$.
If each weight function represents the same data semantic, say population size, at different times, we consider ${W} = \{ w_1, \ldots, w_k \}$ ordered by the $k$ time steps; we call this setting \emph{time series}. 
If each weight function represents measurements of different data semantics (possibly at the same time), say population and gross domestic product, we treat ${W}$ as an unordered set; we call  this setting \emph{weight vectors}. 


As we focus on cartogram stability over multiple datasets, we combine the weight functions into one LP that computes the set of \demersabbrevs, with potentially different centers $P_i(r)$ for each region $r$ and weight function $w_i$. 
This lets us add constraints and optimization objectives for stability. 
We change objective~\eqref{eq:objective} and add constraints to minimize displacement between centers of the same region for different weight functions. 
We re-use notation in Section~\ref{sec:single} with superscript $i$ denoting respective variables for weight function $w_i \in W$.
\begin{align}
    &\min \sum_{i=1}^{k}\sum_{\{r,r'\} \in {T}} (h^i_{r,r'} + v^i_{r,r'}) + 
    \mathrlap{ \sum_{\{i,j\} \in I} \sum_{r \in R} (c_r^{i,j} + d_r^{i,j})} && \label{eq:objective2}\\
    &c_r^{i,j} \ge \max \{(x_r^i - x_r^j), (x_r^j - x_r^i)\} && \forall r \in R, \{i,j\} \in I \label{eq:hordisp}\\
    &d_r^{i,j} \ge \max \{(y_r^i - y_r^j), (y_r^j - y_r^i)\} && \forall r \in R, \{i,j\} \in I \label{eq:verdisp}
\end{align}
Here set $I$ contains index pairs of weight functions $\{w_i, w_j\}$ for which  displacement should be minimized. 
For each $r \in R$, variables $c_r^{i,j}$ and $d_r^{i,j}$ measure the horizontal and vertical displacement between $P_i(r)$ and $P_j(r)$ due to \eqref{eq:hordisp} and \eqref{eq:verdisp}. 

For which weight functions to relate in $I$, we consider two options: 1) relate \emph{all pairs} of functions so $I=\binom{W}{2}$, which is natural for weight vectors where an analyst may want to compare the \demersabbrevs for any two weight functions; and 2) relating \emph{consecutive} pairs in a predefined order of the functions so $I=\{(i,i+1) \mid 1 \le i \le k-1\}$, which is natural for time series.
An alternative 3) for time series initially computes a \demersabbrev for $w_1$ (e.g., minimizing displacement to region centroids in the initial map) and then iteratively solves the LP for one \demersabbrev and weight function $w_i$ ($i\ge 2$), where we minimize the displacement only with respect to the previously solved \demersabbrev for weight function $w_{i-1}$. 
Due to its restricted solution space 3) is expected to be faster to solve than 2) but with lower stability. 
In some scenarios another option 4) may be worthwhile: one weight function, say $w_1$, may be considered central to the dataset and displacements are only considered
relative to it, so $I$ contains pairs $\{1,i\}$ for all $2\le i\le k$.

Not all planar graphs can be represented using touching squares of any size.
A real-world example is Luxembourg having three pairwise neighbors; the input graph $G$ is a $K_4$.
Thus any \demersabbrev~may need to break some adjacencies. 
To show lost adjacencies we use \emph{leaders} -- orthogonal polylines connecting the two squares.
We want leaders to have minimal length and low complexity which we can guarantee under mild assumptions: 1) leaders can coincide with square boundaries; 2) regions to be connected are \emph{realisable}, i.e., a valid \demersabbrev (with possibly different weights) exists for each pair of regions such that they are adjacent.
Let $L_1^B(r_1,r_2)$ denote the minimal $L_1$ distance between squares of regions $r_1$ and $r_2$ in \demersabbrev $B$. The following lemmas are proven in Appendix~\ref{app:leaderlength} -- the proof of the first is constructive and gives a simple $O(n^2)$ algorithm to compute all leaders.

\begin{restatable}{lemma}{leaderlength}
\label{lem:leaderlength}
Consider \demersabbrevs with separation constraints $H$, $V$ and two regions $\{r_1,r_2\} \in T$. Let $(r_1,r_2)$ be a minimal pair in $H$ or $V$. 
Then, in any \demersabbrev $B$, there is a monotone leader $\ell$ between $r_1$ and $r_2$ with length $L_1^B(r_1,r_2)$.
\end{restatable}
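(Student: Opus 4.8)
The plan is to exhibit the leader explicitly and then show it avoids every other square, with the minimality of the pair $(r_1,r_2)$ being exactly what rules out obstructions. Assume w.l.o.g.\ that $(r_1,r_2)\in H$; otherwise transpose the plane, which exchanges the roles of $H$ and $V$, exchanges the two coordinates in $L_1^B$, and preserves monotonicity. In the \demersabbrev $B$, write the square of $r_i$ as the axis-aligned box with left and right sides $a_i\le b_i$ and bottom and top sides $c_i\le d_i$. Since $\{r_1,r_2\}\in T$ we have $\mathit{gap}_{r_1,r_2}=0$, so constraint~\eqref{eq:sepX} forces $b_1\le a_2$: the squares are separated, or touch, along the $x$-axis. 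If the vertical extents $[c_1,d_1]$ and $[c_2,d_2]$ overlap, pick any $y^\star$ in their intersection and take $\ell$ to be the horizontal segment from $(b_1,y^\star)$ to $(a_2,y^\star)$. Otherwise, say $d_1<c_2$ (the case $d_2<c_1$ is symmetric), put $Q=[b_1,a_2]\times[d_1,c_2]$ and take $\ell$ to be any $x$- and $y$-monotone staircase inside $Q$ from the top-right corner $p=(b_1,d_1)$ of $S_1$ to the bottom-left corner $q=(a_2,c_2)$ of $S_2$; for instance the $L$-shaped path through $(a_2,d_1)$.

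\emph{Avoidance via minimality.} It remains to show that $\ell$ meets the interior of no square other than where it should touch $S_1$ and $S_2$. Suppose some square $S_3$ with $r_3\notin\{r_1,r_2\}$ had a point of its interior on $\ell$. I would rule out, one by one, every separation relation that pushes $S_3$ off $\ell$: $(r_3,r_1)\in H$ forces $b_3\le a_1<b_1$, placing $S_3$ entirely left of $\ell$; $(r_3,r_1)\in V$ forces $d_3\le c_1$, placing $S_3$ entirely below $\ell$; and symmetrically $(r_2,r_3)\in H$ and $(r_2,r_3)\in V$ place $S_3$ entirely right of, respectively above, $\ell$ --- each contradicting that $S_3$'s interior meets $\ell$. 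Since every pair of regions lies in $H\cup V$, the pair $\{r_1,r_3\}$ must then be oriented $(r_1,r_3)$ and the pair $\{r_3,r_2\}$ oriented $(r_3,r_2)$ in $H\cup V$; that is, $r_1\to r_3\to r_2$ is a directed path in $D=(R,H\cup V)$, so $r_3$ lies strictly between $r_1$ and $r_2$ in $D$ --- contradicting that $(r_1,r_2)$ is a minimal pair. The same four inequalities simultaneously dispose of the degenerate coincidences $b_1=a_2$ and ``$[c_1,d_1]\cap[c_2,d_2]$ a single point''. I expect this case distinction --- covering all four relation directions for both endpoints, and both shapes of $\ell$, uniformly --- to be the only real obstacle; everything else is bookkeeping.

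\emph{Length and monotonicity.} By construction $\ell$ is $xy$-monotone and, by the previous step, a valid leader that touches $S_1$ only at its start point and $S_2$ only at its end point (and may run along, but never through, other squares, which is permitted). Its length is $a_2-b_1$ in the first case and $(a_2-b_1)+(c_2-d_1)$ in the second. In either case this equals $L_1^B(r_1,r_2)$, since $S_1$ and $S_2$ are axis-aligned boxes separated by exactly $a_2-b_1\ge 0$ along the $x$-axis and by $c_2-d_1$ (respectively $0$) along the $y$-axis, so no pair of points of $S_1$ and $S_2$ is closer in $L_1$. Finally, $\ell$ is read off directly from the four relevant corner coordinates, which is what makes the construction effective and, applied to all pairs in $T$, yields the $O(n^2)$ algorithm announced above.
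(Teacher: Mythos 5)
Your Case~1 (the vertical extents of the two squares overlap) is sound and matches the paper's argument. The genuine gap is in Case~2/3, in the step ``Avoidance via minimality.'' After ruling out the four relations $(r_3,r_1)\in H$, $(r_3,r_1)\in V$, $(r_2,r_3)\in H$, $(r_2,r_3)\in V$, you conclude that $r_1\to r_3\to r_2$ is a directed path in $D=(R,H\cup V)$ and declare this to contradict minimality of $(r_1,r_2)$. But minimality of the pair in $H$ (the relevant hypothesis, which the paper uses only to guarantee a cartogram $A$ in which $r_1$ and $r_2$ touch) only excludes the \emph{same-type} chain $(r_1,r_3)\in H$ and $(r_3,r_2)\in H$: that chain forces the squares of $r_1$ and $r_2$ at least $w(r_3)$ apart in every valid placement, contradicting $A$. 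The \emph{mixed} chains --- $(r_1,r_3)\in V$ together with $(r_3,r_2)\in H$, or $(r_1,r_3)\in H$ together with $(r_3,r_2)\in V$ --- are fully consistent with $r_1$ and $r_2$ touching in $A$ (such an $r_3$ simply sits above $r_1$ and left of $r_2$, or right of $r_1$ and below $r_2$, next to their common boundary), so no contradiction is available. If instead you read ``minimal'' as ``no directed $2$-path in $D$ at all,'' the lemma becomes vacuous, since in the weak setting every pair of regions is comparable in $H\cup V$ and such paths are ubiquitous.

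These mixed-chain regions are not a hypothetical loophole: a region with $(r_1,r_3)\in H$ and $(r_3,r_2)\in V$ may legitimately straddle the line $y=d_1$ inside your rectangle $Q$ and block the bottom-right $L$-path through $(a_2,d_1)$, while a region with $(r_1,r_3)\in V$ and $(r_3,r_2)\in H$ may straddle $x=b_1$ and block the top-left $L$-path. Hence no single prescribed $L$-shape (indeed, no staircase fixed in advance) works, and the two-bend conclusion is in fact reserved for the separate lemma that assumes the strong setting. The paper's proof instead builds the leader \emph{adaptively}: it collects the obstructing regions $L=\{r:(r_1,r)\in V,\ (r,r_2)\in H\}$, carves the staircase domain $S'$ out of $Q$ using their bottom-right corners, takes the resulting monotone boundary as $\ell$ (still of length $L_1^B(r_1,r_2)$ because the carving never forces a detour outside $Q$), and only then applies the contradiction-with-$A$ argument to whatever could still cross that particular staircase. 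Your proof needs this adaptive routing step; the minimality hypothesis alone cannot carry the avoidance argument for a fixed path.
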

\begin{restatable}{lemma}{leaderbends}
\label{lem:leaderbends}
Let $\{r_1,r_2\} \in T$ and assume a \demersabbrev~$A$ exists with $r_1$ and $r_2$  adjacent, from which $H$ and $V$ are derived in the strong setting.
Then, for any \demersabbrev~$B$ satisfying $H$ and $V$, a leader $\ell$ exists between $r_1$ and $r_2$ with at most two bends.
\end{restatable}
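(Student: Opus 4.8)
The plan is to reduce the claim to exhibiting a leader with a single ``staircase step''. By the symmetry between $H$ and $V$ and between the two orientations of each axis, I may assume the primary constraint of the adjacent pair is $(r_1,r_2)\in H$; write $S^B_i$ (resp.\ $S^A_i$) for the square of $r_i$ in $B$ (resp.\ $A$). The first point is that $(r_1,r_2)$ is then a \emph{minimal} pair in $H$: since $r_1,r_2$ are adjacent in $A$ and $A$ satisfies all constraints, $S^A_1$ and $S^A_2$ are flush along a common vertical line, leaving no horizontal room between them, so no region $r_3$ can satisfy both $(r_1,r_3)\in H$ and $(r_3,r_2)\in H$ (such an $r_3$ would force a square strictly between $S^A_1$ and $S^A_2$). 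Hence Lemma~\ref{lem:leaderlength} applies and produces an $xy$-monotone leader of length $L_1^B(r_1,r_2)$ in $B$; it remains to show such a leader can be taken with at most one monotone step, i.e.\ at most two bends.

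I would then split on the vertical relation of $S^B_1$ and $S^B_2$. The recurring tool is a ``position argument'': if $S^B_3$ lies to the right of $S^B_1$ then $(r_3,r_1)\notin H$, and if the $y$-ranges of $S^B_3$ and $S^B_1$ overlap then $(r_1,r_3),(r_3,r_1)\notin V$; since $\{r_1,r_3\}$ must contribute at least one ordered pair to $H\cup V$, this forces $(r_1,r_3)\in H$ (and symmetrically for $r_2$). If the $y$-ranges of $S^B_1$ and $S^B_2$ overlap, the monotone leader is a single horizontal segment at a common height $h$; a square blocking it lies strictly between $S^B_1$ and $S^B_2$ horizontally with $y$-range meeting both of theirs, so by the position argument it satisfies $(r_1,r_3)\in H$ and $(r_3,r_2)\in H$, contradicting minimality. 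Thus $0$ bends suffice in this case.

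The hard case, which I expect to be the main obstacle, is when the $y$-ranges are disjoint in $B$, say $S^B_2$ above $S^B_1$; the monotone leader then runs rightwards and upwards through the ``staircase box'' $Q$ spanned by the top-right corner of $S^B_1$ and the bottom-left corner of $S^B_2$. For a square $S^B_3$ meeting $Q$, the position argument again leaves only forward $H$/$V$ relations with $r_1$ and $r_2$; $(r_1,r_3)\in H,(r_3,r_2)\in H$ is excluded by minimality, and $(r_1,r_3)\in V,(r_3,r_2)\in V$ by transitivity (it would imply $(r_1,r_2)\in V$, impossible for an adjacent pair), leaving only the mixed case $(r_1,r_3)\in H,(r_3,r_2)\in V$, which puts $S^B_3$ strictly right of $S^B_1$'s right edge and strictly below $S^B_2$'s bottom edge. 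Similarly, any square $S^B_4$ straddling the vertical line through $S^B_1$'s right edge inside the height range of $Q$ is forced to satisfy $(r_1,r_4)\in V$ and $(r_4,r_2)\in H$ (here acyclicity of $D$ rules out the $V$-chain $r_1\to r_4\to r_2$), so it too lies strictly between the two squares horizontally. I would finish by arguing that the strong-setting secondary constraints put enough order on these obstacles — ``diagonal'' adjacency-free pairs become separated along \emph{both} axes, so the obstacles line up along the diagonal of $Q$ rather than stacking across it — that the one-step leader ``go up along a slight rightward shift of $S^B_1$'s right edge, then right along $S^B_2$'s bottom edge'' avoids all of them, giving at most two bends. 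The delicate part is the bookkeeping showing that no square can block both the vertical and the horizontal leg of every such one-step leader at once; the leverage that makes it go through is exactly the cross-axis separation enforced by the strong setting.
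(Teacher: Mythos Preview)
Your case split and obstacle classification are on the right track, and you correctly identify that the strong setting forces a cross-axis relation between obstacles of the two ``mixed'' types. However, the concrete leader you propose---``go up along a slight rightward shift of $S^B_1$'s right edge, then right along $S^B_2$'s bottom edge''---does not work: it is blocked precisely by your $r_4$-type obstacles (those with $(r_1,r_4)\in V$ and $(r_4,r_2)\in H$). Such a square sits above $r_1$ but may extend well to the right of $r_1$'s right edge while still remaining left of $r_2$, so your vertical leg at $x\approx x_1$ runs straight through it. The assertion that ``obstacles line up along the diagonal'' does not rescue this particular construction; it also does not follow that an $r_4$-type region must straddle the line $x=x_1$ to intersect $Q$, so your two obstacle classes are not cleanly distinguished by their position in $B$.

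The paper uses the same ingredients but assembles them differently. The leader goes up from $(x_1,y_1)$ only until it reaches the bottom of the \emph{lowest} $r_4$-type obstacle $r'$ (or to $y_2$ if none exists), then right all the way to $x_2$, then up---an ``up--right--up'' path with at most two bends. The first vertical leg and the part of the horizontal leg left of $r'$ are safe by the proof of Lemma~\ref{lem:leaderlength}. For the remainder, a blocking region $r$ is forced (by your own position argument, ruling out the $H$--$H$ case via minimality) to be of $r_3$-type, i.e.\ $(r_1,r)\in H$ and $(r,r_2)\in V$. Now in $A$ every $r_3$-type square lies strictly below every $r_4$-type square, so the strong setting yields $(r,r')\in V$; hence in $B$ the top of $r$ lies below the bottom of $r'$, which is exactly the height of the horizontal leg---contradiction. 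This is the cross-axis separation you allude to, but it must be invoked between the blocker $r$ and the defining region $r'$ to choose the \emph{height} of the horizontal leg, not to shift the vertical leg sideways. (A minor point: your exclusion of the $V$--$V$ chain should appeal to transitivity together with the fact that an adjacent pair lies in exactly one of $H,V$, not to acyclicity of $D$.)
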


\section{Experimental setup}
\label{sec:setup}
We compare 18 variants of our linear programs with each other and to 4 variants of a baseline force-directed \demersabbrev~layout implementation, as described below.

\paragraph{Linear programs.}
We categorize our method according to three criteria: A) optimization term, B) method of deriving constraints, and C) how we deal with different time steps.
For A) our linear program admits three primary optimization terms:
\emph{TOP} -- distance between topologically adjacent regions; 
\emph{CNT} -- number of lost adjacencies; 
\emph{ORG} -- distance to the origin (region's centroid in the geographic map).
We use the indicated primary optimization term, complemented by the secondary constraint of maintaining relative directions.
For B), separation constraints are deduced from the input map in one of two ways, \emph{S} and \emph{W}, matching the strong and weak case respectively.
For C), we deal with different weight values (time series/weight vectors) in three ways
called \emph{stability implementations}:
\emph{CO} -- we add an optimization term to minimize distance between layouts of all (complete) weight value pairs;
(2) \emph{SU} -- we add an optimization term to minimize distance between layouts of successive weight values;
(3) \emph{IT} -- we iteratively solve a linear program including an optimization term to minimize distance to previously calculated layouts.
We specify our methods by concatenating the three aspects in order, for example, TOP-S-SU indicates the linear program optimized for distances of topologically adjacent regions with strong separation constraints and with successive weight values linked.

\paragraph{Force-directed method.}
\demersabbrevs are hard to track down in literature, especially regarding computation.
To our knowledge, there is no common baseline for computing a \demersabbrev; we introduce a simple one. As Dorling cartograms and \demersabbrevs are 
similar \cite{Demers} and
Dorling cartograms use a force-directed method, we implement one here, too: FRC. 
For each pair of regions we define a disjointness force based on Chebyshev distance between their centers, which grows quadratically to push squares apart. We use the same desired distance as in Section~\ref{sec:single} at which this force becomes zero.
We also add a force for cartogram quality, either towards their original locations (FRC-O) or between adjacent regions (FRC-T).
We initialize the process with map locations (U; unstable) or the result for previous weights (S; stable). See Appendix~\ref{app:force} for more details.

\paragraph{Metrics: cartogram quality.}
Our algorithms inherently yield zero cartographic error, and shape deformation is constant over all possible \demersabbrevs.
To evaluate cartogram quality we use three metrics, each normalized between 0 and 1
; smaller values are better.
We measure topological accuracy as the number of lost adjacencies (\textbf{MADJ}) in each of the $k$ computed layouts, normalized by the number of adjacencies $k|T|$.
To measure preservation of relative directions (\textbf{MREL}) with respect to the input map, we use the Relative Position Change Metric~\cite{sondag17} which captures the preservation of the spatial mental model (orthogonal order) in a fine-grained way. 
Each rectangle defines eight zones by extending its sides to infinite lines. %
Between a pair of input map regions $(r,r')$ we consider fractions of the bounding box that fall into each zone; if bounding boxes overlap, we scale values so they sum to $1$.
We do the same between the corresponding squares in the cartogram layouts. The measure between two regions is half the sum over all absolute differences between fractions per zone; the value is in $[0,1]$ but is not symmetric. Finally, we take the average over all pairs.
For spatial deformation we measure distance to map origins (\textbf{MDIS}), average $L_1$ distance of each region $r$ in the \demersabbrev to its origin (centroid of $r$ in the geographic map), normalized by dividing with the $L_1$ distance of the diagonal of the map.

\paragraph{Metrics: stability.}
We also want to assess stability, or layout similarity, between the \demersabbrevs by two quality metrics, based on treemap stability metrics \cite{sondag17}, interpreting \demersabbrevs as special treemaps with added whitespace. 
The first is based on geometric distances between the layouts: the layout distance (\textbf{SDIS}) focuses on the change in position of the squares. 
The layout distance change function as presented by Shneiderman and Wattenberg~\cite{shneiderman2001ordered} is the most common one. 
It measures Euclidean distance between rectangles $r$ and $r'$.
We take the average over all pairs, and normalize by dividing with the $L_1$ distance of the largest diagonal of the two \demersabbrevs.
The result is related to our optimization term for quality when dealing with multiple weights (see Section~\ref{sec:timevar}). 
The second metric, relative directions between layouts (\textbf{SREL}), focuses on changes in relative directions; it is analogous to MREL, but compares two layouts instead.

\paragraph{Datasets.}
We run experiments on real-world datasets. For time-series data, we expect a gradual change and strong correlation between the different values. For weight-vectors data, we expect more erratic changes and less correlation.
We use two maps with rather different geographic structures: the first (\textbf{World}) is a map of world countries, having mixed region (country) sizes in a rather unstructured manner; the second (\textbf{US}) is a map of the 48 contiguous US states, having relatively high structure in  sizes of its states, with large states in the middle and along the west coast and many smaller states along the east coast.
We collected five time series for the World and four for the US map of which the details are given in Appendix~\ref{app:data}.
We transformed these into a weight-vectors dataset by taking the values of 2016 for each of these time series, resulting in five weight vectors for the World map, and four for the US map.

The various datasets have different scales, and need be projected into a reasonable square size to compute a \demersabbrev.
We compute the diagonal $\Delta$ of the bounding box of the map.
For a time-series dataset, we find the region $r$ with maximal $w_i(r)$ for any $i$ and scale values such that $w_i(r) = \Delta/4$.
For a weight-vectors dataset, we do the same, but scale the values for each \demersabbrev separately.

\paragraph{Running times.}
We ran the experiments using IBM ILOG CPLEX 12.8 to solve the (I)LP. 
We observe the following running times on a normal laptop: *-*-IT and FRC-O-* finished within seconds (USA) or a minute (World); *-*-\{SU,CO\} took around a minute (USA) or below 5 minutes (World); 
FRC-T-* was completed in minutes (USA) or hours (World). 
CNT-*-* is an an \emph{integer} linear program rather than a regular linear program (or force-directed method); its computational complexity is significantly higher, and intractable in many cases.
Only CNT-*-IT variants were successfully solved, and only on the US map; for all other cases it ran out of memory (48 GB allocated).

\section{Experimental results}
\label{sec:results}
We discuss results and four questions:
1) How much does the strong versus weak setting affect quality?
2) How much does stability implementation matter?
3) Which optimization criteria perform best?
4) What is the effect of separation constraints in our LP, compared to a force-directed method for \demersabbrevs? 
Fig.~\ref{fig:USdiffalgo} shows the result of two algorithms for the US. Appendix~\ref{app:morefigures} and the supplementary video show more \demersabbrevs for different settings.
%
\begin{figure}[t]
\centering
\includegraphics[width=0.3\linewidth]{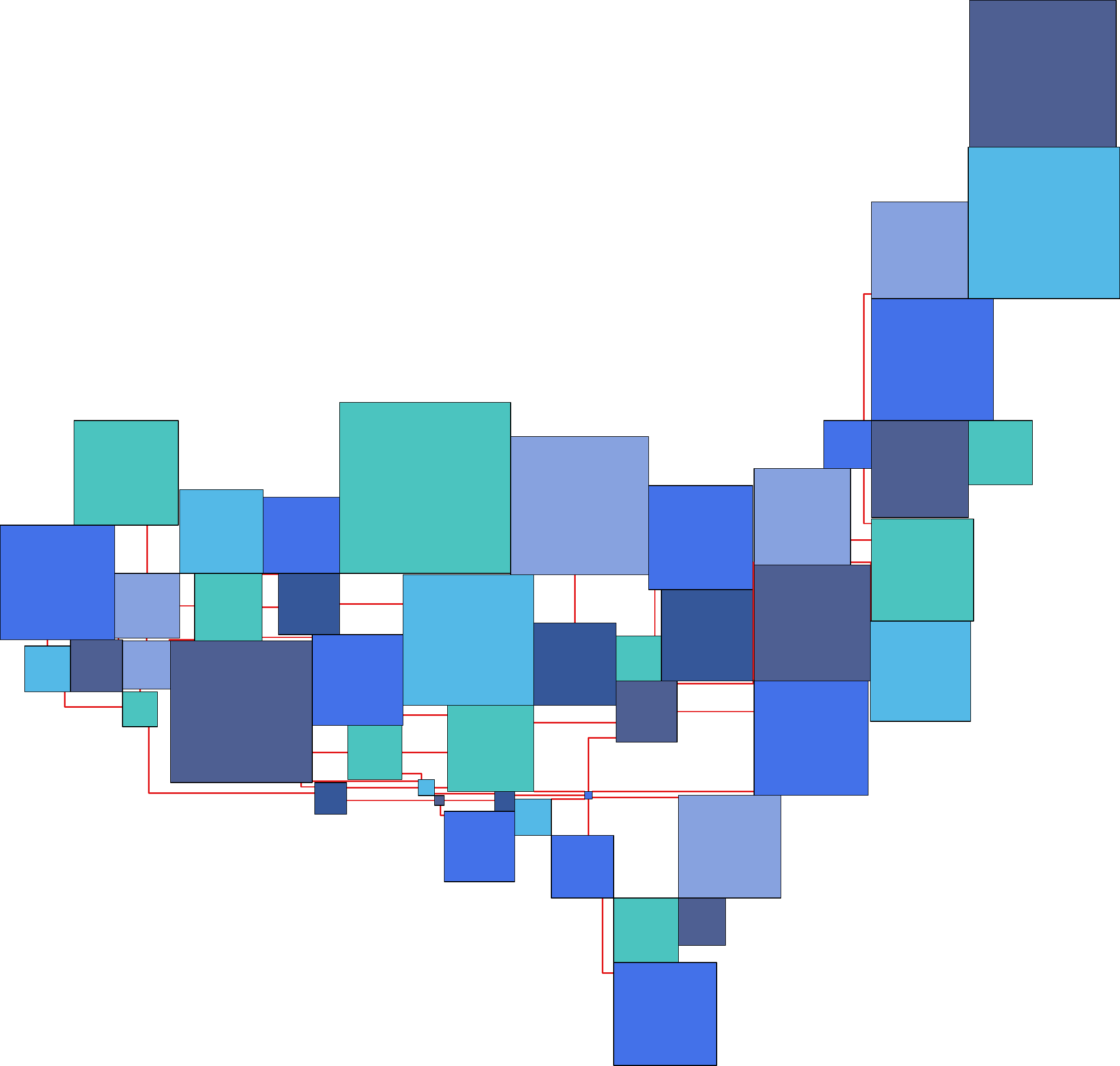}
\quad
\includegraphics[width=0.3\linewidth]{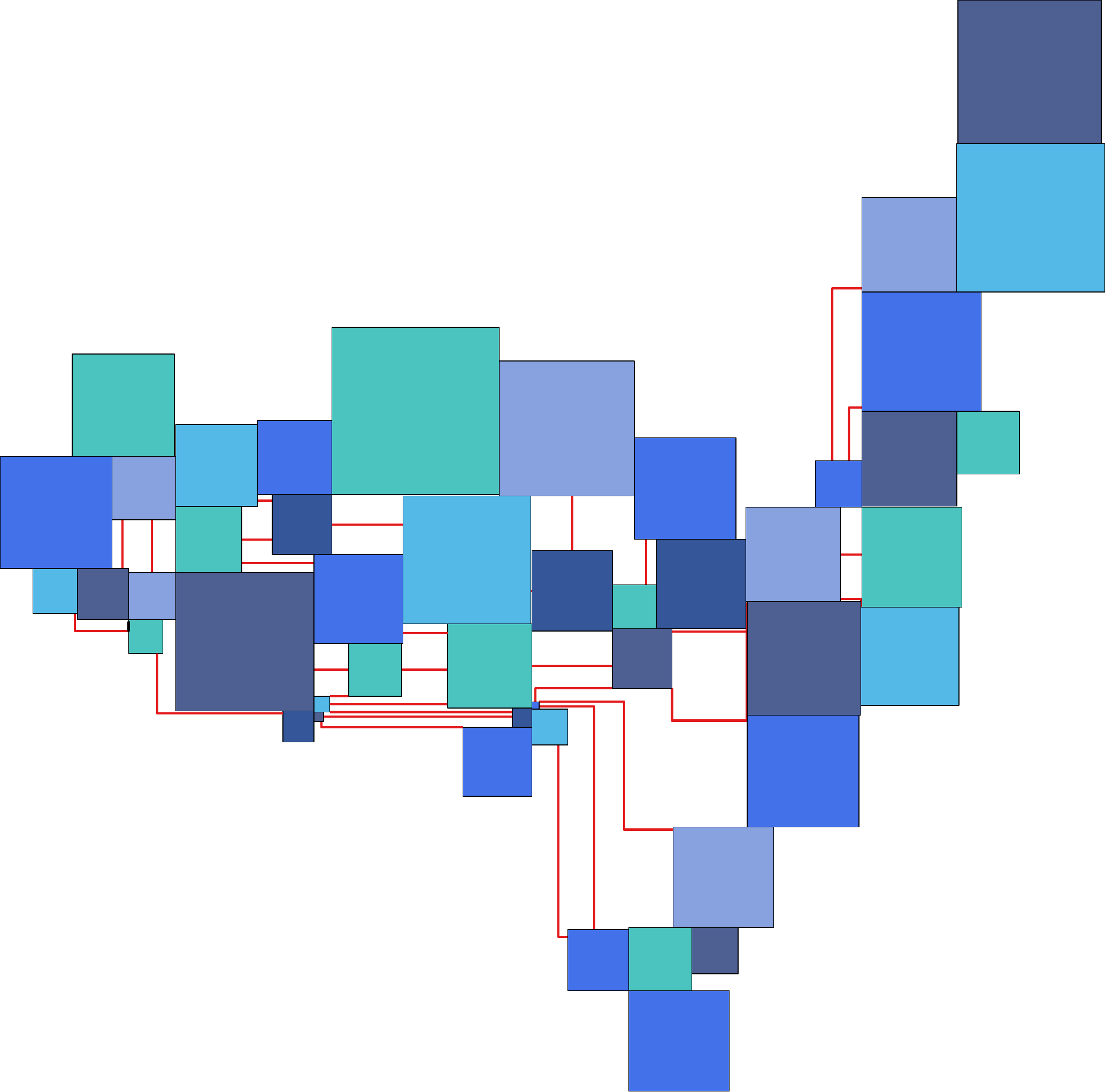}
\caption{ 
US election turnout data \demersabbrev in 2016, by TOP-W-CO and CNT-W-IT.}
\label{fig:USdiffalgo}
\end{figure}

\paragraph{Strong versus weak setting.}
Fig.~\ref{fig:sVSw} shows the average metric values for the iterative variants, over all datasets and linear programs.
We find that the strong case (additional separation constraints) reduces the error in relative direction for both cartogram quality and stability: the average score for MREL, including CNT variants where possible, reduces from 0.21 to 0.16; similarly, stability (SREL) decreases from 0.059 to 0.045 due to decreased movement freedom of the squares. This is at the expense of topological error (MADJ increases from 0.58 to 0.61) and origin displacement (MDIS increases from 0.16 to 0.17).
The effect is present independent of optimization criterion and stability implementation though its strength varies. 
Effects remain noticeable but of varying strength when we control for type of dataset, except MDIS slightly decreases for US datasets (0.116 to 0.107) in the strong setting.
We also see a clear difference between optimization terms (CNT, TOP, ORG), discussed later.

\begin{figure}[t]
    \centering
    \includegraphics[page=2]{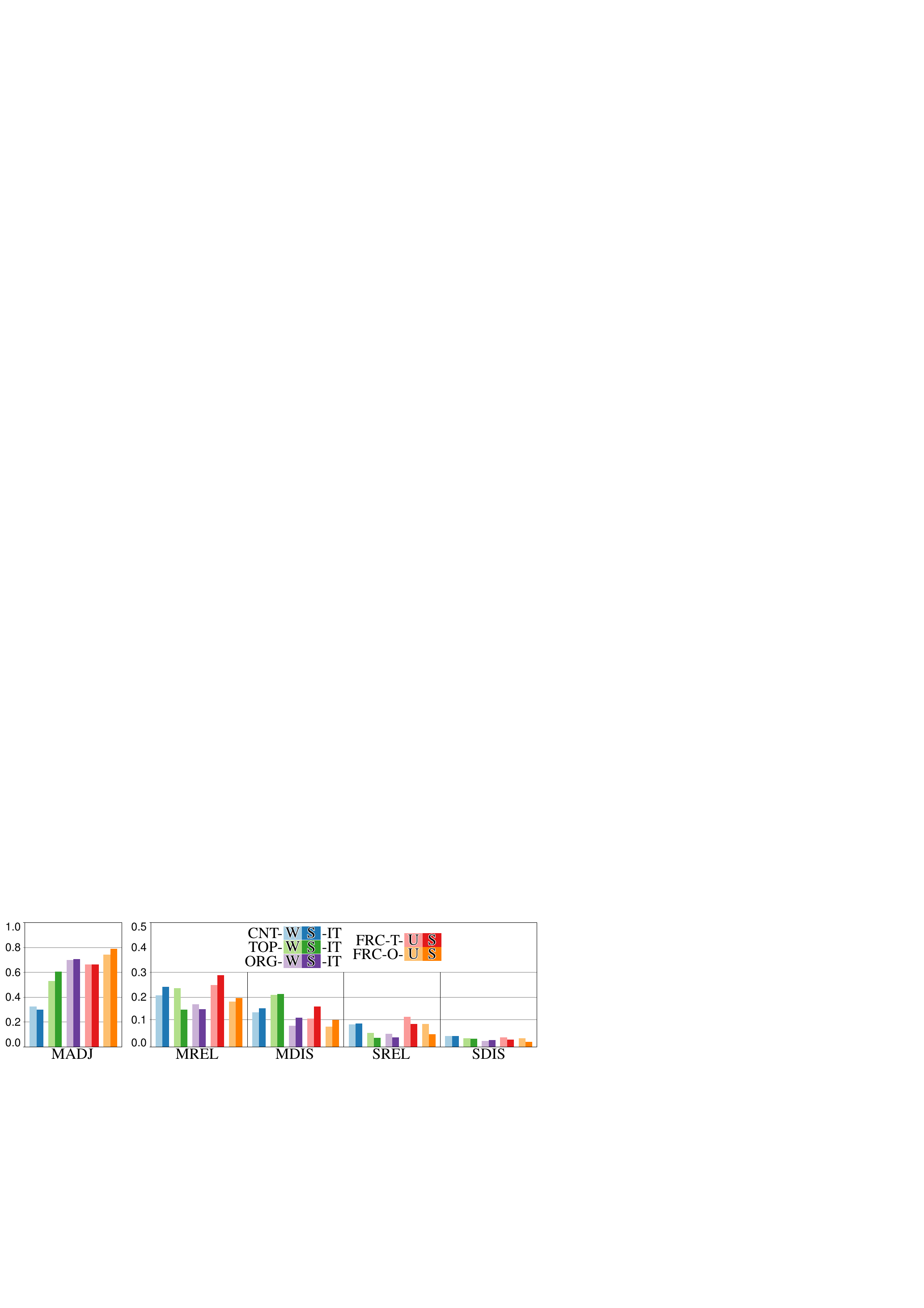}
    \caption{Bar chart of average metric scores, for IT settings of all linear programs and the FRC directed variants. We see similar effects when switching from the weak version to the strong version of the IT setting for all three optimization settings. We also see a strong effect when choosing different optimization settings for the IT setting.  FRC is generally outperformed by the \{TOP,ORG\}-W-IT variants. 
    }
    \label{fig:sVSw}
\end{figure}

\paragraph{Stability implementation.}
In time-series datasets there is little difference in stability over the three settings: time series data change gradually over time so choosing which pairs to optimize does not have much influence. In weight-vectors datasets, even with only few weights per region (five for the World, four for the US), an effect becomes noticeable in the IT setting. CO and SU behave nearly identical, but this might be an artifact of only having a few weights per region.
Compared to CO (and SU) setting, the iterative version scores better on MDIS (0.31 versus 0.26) but worse on the stability metric SDIS (0.084 versus 0.10). 
For weight-vectors datasets it is thus better to use the SU variant as this achieves better stability and is only slightly more expensive to compute compared to IT variants. The added complexity of CO does not seem to pay off. 

\paragraph{Optimization criteria.}
We use three metrics for cartogram quality: MADJ and MDIS are optimized explicitly with the CNT and ORG objectives respectively, the third metric MREL corresponds to a secondary objective term.
To compare the TOP/CNT/ORG objective terms, we consider the IT variant (see Fig.~\ref{fig:sVSw}), as other stability implementations could not solve the CNT objective; still, we found similar patterns for the SU and CO cases.

For MADJ, CNT finds the optimal value (0.31) under the given constraints. 
TOP (0.57) does clearly better than ORG (0.70), somewhat in contrast to observations of \cite{eppstein,smwg}: for grid maps, the MDIS metric that ORG optimizes is a good proxy for maintaining topology; our results suggest this is not so for \demersabbrevs.

For MDIS and MREL metrics ORG performs best; for MDIS, CNT performs slightly better compared to TOP and vice versa for MREL. 
Thus, in terms of spatial quality, ORG seems a good objective, except for topological error -- which is typically of primary concern for cartograms.

For stability metrics SDIS and SREL, ORG outperforms TOP which outperforms CNT.
We explain it by inherent stability of the map which is the same for all \demersabbrevs. CNT does poorly; it is fairly unconstrained for lost adjacencies whereas TOP aims to keep such pairs close.

ORG scores best on all metrics except MADJ; its MADJ score is high, losing 70\% of adjacencies on average. 
In contrast, CNT optimizes the number of adjacencies, but is clearly worse on other metrics and is computationally expensive. 
There is thus a trade-off present between topological error and other quality aspects. 
TOP makes this trade-off, scoring reasonably on most metrics.

\paragraph{Comparison to FRC.}
Our linear programs enforce separation constraints which help maintain spatial relations and the spatial mental model; they are required for the linear program but not in general.
To study their effect, we compare to FRC which does not enforce separation constraints; results are shown in Fig.~\ref{fig:sVSw}. 
Comparing FRC-T and FRC-O variants, we see the same behavior as in 
the TOP versus ORG linear programs:
FRC-O performs worse than FRC-T on ADJ, and better on the other metrics.
Layout initialization trades off stability versus cartogram quality:
FRC-*-S variants have better stability scores and worse quality scores compared to FRC-*-U.

As it has the fewest constraints, we compare ORG-W-IT to FRC methods:
FRC-O-* are slightly worse or equal to ORG-W-IT on all metrics; 
FRC-T-* are worse than ORG-W-IT on all metrics except ADJ
where it is a lightly better, but the number of adjacencies lost is still clearly higher compared to TOP-W-IT.

To conclude, in general we outperform FRC for the various metrics by an appropriate setting in our linear program.
No single setting outperforms all FRC variants.
The large difference with TOP-variants in terms of MADJ suggests TOP variants are a good choice for high-quality stable \demersabbrevs.

\section{Discussion and future work}

We described a linear program to compute stable Demers cartograms, based on separation constraints and minimizing distance between adjacent regions. It allows overlap-free transitions between weight functions and connecting lost adjacencies with short, low-complexity leaders.
Experiments show it offers a good trade-off between topological error and other criteria.
It outperforms basic force-directed layouts, though there is not a unique variant that does so, suggesting an interplay between separation constraints, optimization, and quality metrics.

In future work we may consider stability in other cartogram styles, and perform human-centered comparisons in addition to computational ones, with methods implemented in interactive systems; such systems can, e.g., emphasize adjacent regions by drawing leaders (at all or more clearly) or link regions back to the geographic map.
We focused on Demers cartograms, but there are many different styles of cartograms. Future work may also investigate stable variants of such other cartogram styles and quantitatively or qualitatively compare them.

\bibliographystyle{splncs04}

\bibliography{references}


\clearpage

\appendix

\section{Omitted proofs}
\label{app:leaderlength}
\label{app:leaderbends}

\leaderlength*
\begin{proof}
If the regions map to touching squares in $B$, the claim clearly holds, so assume that the squares are disjoint.
Our assumption of minimality tells us that there is a cartogram $A$ with $L_1^A(r_1,r_2) = 0$.
Without loss of generality, assume that $(r_1,r_2) \in H$ (thus $r_1$ is left of $r_2$ in both $A$ and $B$). The other cases are analogous up to rotation and symmetry. 
This assumption implies that the common boundary between $r_1$ and $r_2$ in $A$ is a vertical segment.
We have to consider three cases for the positions of $r_2$ with respect to $r_1$ in $B$.

\emph{Case 1:}
A horizontal line intersects both $r_1$ and $r_2$.
We draw a horizontal leader $\ell$ directly from $r_1$ to $r_2$ in $B$ along this line.
To derive a contradiction, assume that some other region $r$ in $B$ intersects $\ell$. A horizontal separating line can then exist neither with $r_1$ nor $r_2$, and thus $(r_1,r),(r,r_2) \in H$. However, $r$ is also between $r_1$ and $r_2$ in $A$: a contradiction to $r_1$ and $r_2$ touching in $A$.

\begin{figure}[b]
    \centering
    \includegraphics{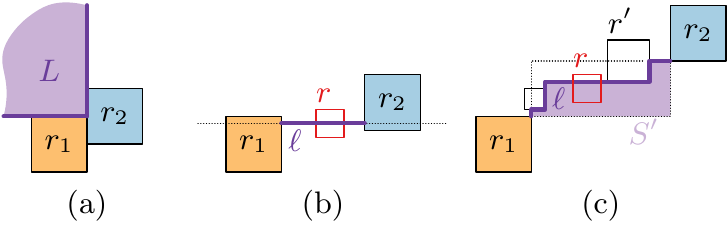}
    \caption{(a) Two adjacent regions in cartogram $A$. For case 2, the region $L$ defining $S'$ lies above $r_1$ and to the left of $r_2$. (b) Case 1 in cartogram $B$: there is a horizontal line intersecting both regions (dotted). Intersecting region $r$ cannot exist. (c) Case 2 in cartogram $B$: $r_2$ is above $r_1$. A region $r$ intersecting $\ell$, the upper boundary of $S'$, cannot exist.}
    \label{fig:my_label}
\end{figure}

\emph{Case 2:}
$r_2$ is fully above $r_1$ in $B$.
Let $S = [x_1,x_2] \times [y_1,y_2]$ denote the rectangle from the top-right corner of $r_1$ to the bottom-left corner of $r_2$.
We must prove that $S$ contains a monotone leader from $(x_1,y_1)$ to $(x_2,y_2)$, such that it does not intersect any square in $B$.
Let $L$ contain all regions $r$ for which $(r_1,r) \in V$ and $(r,r_2) \in H$.  
Consider a domain $S' \subseteq S$ that is defined by removing all points $(x,y)$ from $S$, for which there is a region $r \in L$ with bottom-right corner $(p,q)$ in $B$ such that $x < p$ and $y < q$.
Note that $S'$ cannot be empty since the separation constraints for the regions in $L$ ensure that the right sides of their squares are strictly to the left of $x_2$ and strictly above $y_1$.
The upper boundary of $S'$ is an orthogonal leader $\ell$ of length $L_1^B(r_1,r_2)$.
Note that each concave bend of $\ell$ (with respect to $S'$) corresponds to some region $r \in L$.
As the upper boundary is trivially orthogonal and of length $L_1^B(r_1,r_2)$, we need to argue only that no other square can intersect $\ell$.

For a contradiction, assume that there is some region $r$ with a square in $B$ that intersects $\ell$.
Note that $r$ cannot be in $L$, as it would otherwise contribute to $S'$.
It must either intersect a horizontal or a vertical segment of $\ell$. Assume that it intersects a horizontal segment. Let $r'$ be the region in $L$ corresponding to the concave bend at the (right) end of the segment; if there is no bend point, we use $r' = r_2$ instead. Since $r$ and $r'$ only have a vertical separator, we know that $(r,r') \in H$. Since $(r',r_2) \in H$ by definition of $L$ (or $r' = r_2$), we know that $r$ is to the left of $r_2$ in $A$.
If $(r_1,r) \in V$, then we would have a contradiction as $r$ would be in $L$. Thus, assume that this is not the case. This implies that $(r_1,r)\in H$, as its position in $S'$ disallows $(r,r_1)\in V$. Thus, $r$ is to the right of $r_1$ in $A$. However, since $r_1$ and $r_2$ are adjacent in $A$, there cannot be such a square to the left of $r_2$ and to the right of $r_1$ and we arrive at a contradiction. An analogous argument shows that no square can intersect a vertical segment. Thus, $r$ cannot exist, and $\ell$ is a leader that does not cross any square.

\emph{Case 3:} $r_2$ is fully below $r_1$. We apply an argument similar to Case 2, defining $S'$ using the regions that are above $r_2$ and to the right of $r_1$ instead, and removing everything above and to the right of the bottom-left point of such regions.

Thus, in all cases cartogram $B$ admits a leader of length $L_1^B(r_1,r_2)$ between two regions that are in $T$, if a cartogram $A$ exists that realizes this adjacency.\hfill\qed
\end{proof}

Based on this proof, the construction of such a minimal leader is straightforward. After sorting the regions on $y_r-w(r)/2$ (the bottom side), we iterate through these regions from bottom to top, and incrementally maintain the rightmost coordinate of a region in $L$, which then traces the upper boundary of $S'$. This is for the illustrated case, the other three cases are analogous. Assuming a planar (and thus linear-size) $G$, it takes $O(n^2)$ time to compute \emph{all} leaders, since the sorting is done only a constant number of times.

\leaderbends*
\begin{proof}
We follow the same setup and case distinction as for the proof of Lemma~\ref{lem:leaderlength}.
In Case 1, the construction readily leads to a 0-bend leader. Cases 2 and 3 are analogous; we therefore consider only case 2 here, where $r_2$ is fully above $r_1$.

Again, consider the region $S = [x_1,x_2] \times [y_1,y_2]$ between the closest corners of the two regions, and the set $L$. Now, we construct the following leader: go up from $(x_1,y_1)$ until you reach either $y_2$ or the $y$-coordinate of the bottom of a region in $L$ (the first convex bend of $S'$ in the proof of Lemma~\ref{lem:leaderlength}). From here, we go to the right, until we reach $x_2$ and go up (if we did not reach $y_2$ yet) to finalize the leader. This leader $\ell$ has at most 2 bends by construction. It remains to argue that no region in $B$ intersects $\ell$. We assume for ease of exposition that the leader consists properly of three segments.

\begin{figure}[t]
    \centering
    \includegraphics{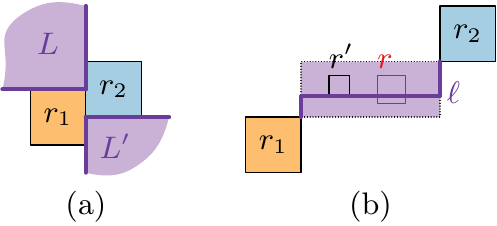}
    \caption{(a) Two adjacent regions in cartogram $A$. (b) A region $r$ that intersects $\ell$ to the right of the defining region $r'$. $r'$ must lie in $L$ and $r$ in $L'$ in $A$, leading to a contradiction.}
    \label{fig:leaderBends}
\end{figure}

Another region can neither intersect the first vertical segment, nor the horizontal segment to the left of its defining square, by the proof of Lemma~\ref{lem:leaderlength}.

Assume for a contradiction that a region $r$ intersects the horizontal segment (or the second vertical segment), position to the right of its defining region $r'$. 
This placement implies that $(r_1,r) \in H$, thus $r$ is to the right of $r_1$ in $A$. Moreover, $r$ is horizontally or vertically separated from $r_2$ (or both). In the former case, it is thus to the left of $r_2$: this is impossible since $r_1$ and $r_2$ are adjacent in $A$. In the latter case, $r$ is in the region $L'$ in Fig.~\ref{fig:leaderBends}. We observe that $r'$ must be fully above $r$ in $A$ and thus $(r,r') \in V$ by definition of the strong separation constraints. 
Thus, the placement in $B$ does not satisfy the separation constraints and we have a contradiction.\hfill\qed
\end{proof}

\section{Force-directed implementation}
\label{app:force}

For each region $r$, we define a force $f(r)$ determining 
the
direction it should move, as a sum of two forces $f(r) := f_d(r) + f_q(r)$.
The \emph{disjointness} force $f_d(r) := D \cdot \sum_{r'} f_d(r,r')$ tries to make $r$ disjoint from other regions; every other region $r'$ yields a force on $r$ if their squares overlap, so $f_d(r,r')$ has direction $r - r'$ and magnitude $((m - L_\infty(r,r')) / m)^2$, where $L_\infty(r,r')$ is the Chebyshev distance between 
square centers
and $m:=w_{r,r'} + gap_{r,r'}$
(cf.\ Section~\ref{sec:single}).
To ensure (nearly) disjoint squares when forces are in equilibrium, we multiply $D$ by a constant.\footnote{50\,000 in our implementation.}
We add another force to capture \emph{cartogram quality}, with two variants; 
we either use forces attracting a region to its origin (centroid in the geographic map), or to  regions to which it should be adjacent according to $T$.
For the origin-attracting version, define $f_q(r)$ with direction $o - r$ and magnitude $\| o - r\| / \Delta$, with $o$ the origin of region $r$ and $\Delta$ the diagonal of the bounding box of all origins.

For the topology-attracting version, we define $f_q(r)$ as the average over all $r'$ with $\{r,r'\} \in T$ of the vector with direction $r' - r$ and magnitude $(L_\infty(r,r') - m)/m$;
if squares of the regions overlap, we set this vector to zero.
We iteratively apply the forces, scaled to ensure no force is larger than the minimum weight (to avoid a square jumping over another). 
We iterate until forces are small enough \footnote{In our implementation this is  $10^{-5}$ times the minimum weight} so we are close or at an equilibrium; note it may leave squares with slight overlap, we ignore such imperfections as allowing some overlap  extends the solution space and should not decrease quality by the metrics below.
This computes a \demersabbrev~for one set of weight values, and must be rerun for each value set. 
We initialize locations for the first set of weight values using  centroids in the map (origins); for subsequent runs, we may choose the same initialization, or centers of the previous layout to improve stability.

\section{Data sources}
\label{app:data}

\stepcounter{footnote}
\footnotetext{\label{f1}\url{https://en.wikipedia.org/wiki/Economy_of_North_Korea\#cite_note-2}}
\stepcounter{footnote}
\footnotetext{\label{f2}\url{https://tradingeconomics.com/kuwait/rural-population-wb-data.html}}
\stepcounter{footnote}
\footnotetext{\label{f3}\url{http://www.efgs.info/wp-content/uploads/conferences/efgs/2016/S8-1_presentationV1_IdrizShala_EFGS2016.pdf}}
\stepcounter{footnote}
\footnotetext{\label{f4}\url{https://tradingeconomics.com/eritrea/rural-population-wb-data.html}}
\stepcounter{footnote}
\footnotetext{\label{f5}\url{https://tradingeconomics.com/kuwait/forest-area-percent-of-land-area-wb-data.html}}
\stepcounter{footnote}
\footnotetext{\label{f6}\url{https://tradingeconomics.com/kosovo/forest-area-percent-of-land-area-wb-data.html}}

The datasets used for the world map are taken from the World Bank~\cite{worldbankdata}. Not all time series are quite complete for these datasets, to avoid biasing results or exaggerating instability, we augmented the data using other sources as is given in Table~\ref{tab:augmentData}. All websites used for this\textsuperscript{\ref{f1}--\ref{f6}} were accessed in February 2019.

For the USA datasets we used four different datasets.
The Drug Poisoning Mortality dataset from the Center for Disease Control and Prevention~\cite{drugdata};
the General Election Turnout dataset is from the United States Elections Project~\cite{electionsdata}; the GDP dataset is from the US Bureau of Economic Analysis~\cite{GDPdata}; and the US population dataset is from the US Census Bureau~\cite{populationdata}.
The complete details for the datasets collected are shown in Table~\ref{tab:dataSetTable}.

\begin{table}[htbp]
    \centering
    \caption{Time-series datasets used in our experiments.}
    \label{tab:dataSetTable}
    \begin{tabular}{lll|lll}
        \textbf{Map} & \textbf{Time series} & \textbf{Years} & \textbf{Map} & \textbf{Time series} & \textbf{Years}\\
        \hline
        World & Forest Area & 2006--2016 & US & Drug Poisoning Mortality & 2007--2016\\ 
        World & GDP & 2006--2016  & US & GDP & 2007--2016\\
        World & GDP per Capita & 2006--2016  & US & General Election Turnout & 1998--2016  \\
        World & Population Growth & 2006--2016 &    &                          & (even years) \\
        World & Rural Population & 2006--2016 & US & Population & 2011--2020 \\
    \end{tabular}
\end{table}

\begin{table}[htbp]
\label{tab:augmentData}
\centering
\caption{Sources for handling missing data values in the time series.}
    \begin{tabular}{lll}
        \textbf{Time series} & \textbf{Country} & \textbf{Source} \\
        \hline
        GDP & DPR Korea & bloomberg via wikipedia\textsuperscript{\ref{f1}} \\
        Rural population & Kuwait & tradingeconomics.com\textsuperscript{\ref{f2}} \\
        & Kosovo & efgs.info\textsuperscript{\ref{f3}} \\
         & Eritrea & tradingeconomics.com\textsuperscript{\ref{f4}} \\
        Forest area & Kuwait & tradingeconomics.com\textsuperscript{\ref{f5}} \\
          & Kosovo & tradingeconomics.com\textsuperscript{\ref{f6}} \\
         & Eritrea & Extended the 2012 value \\
         population growth & Eritrea & Extended the 2012 value \\
    \end{tabular}
\end{table}

%
%
%
%
%
%

\section{Additional cartograms}
\label{app:morefigures}

The figures on this page and the next illustrate the results of the various algorithms used in our experiments on the US weight-vectors dataset. For each algorithm we show the same three data values: Drug poisoning mortality, population and general election turnout, all in 2016.


\begin{figure}[b]
    \centering
    \includegraphics[width=.3\linewidth]{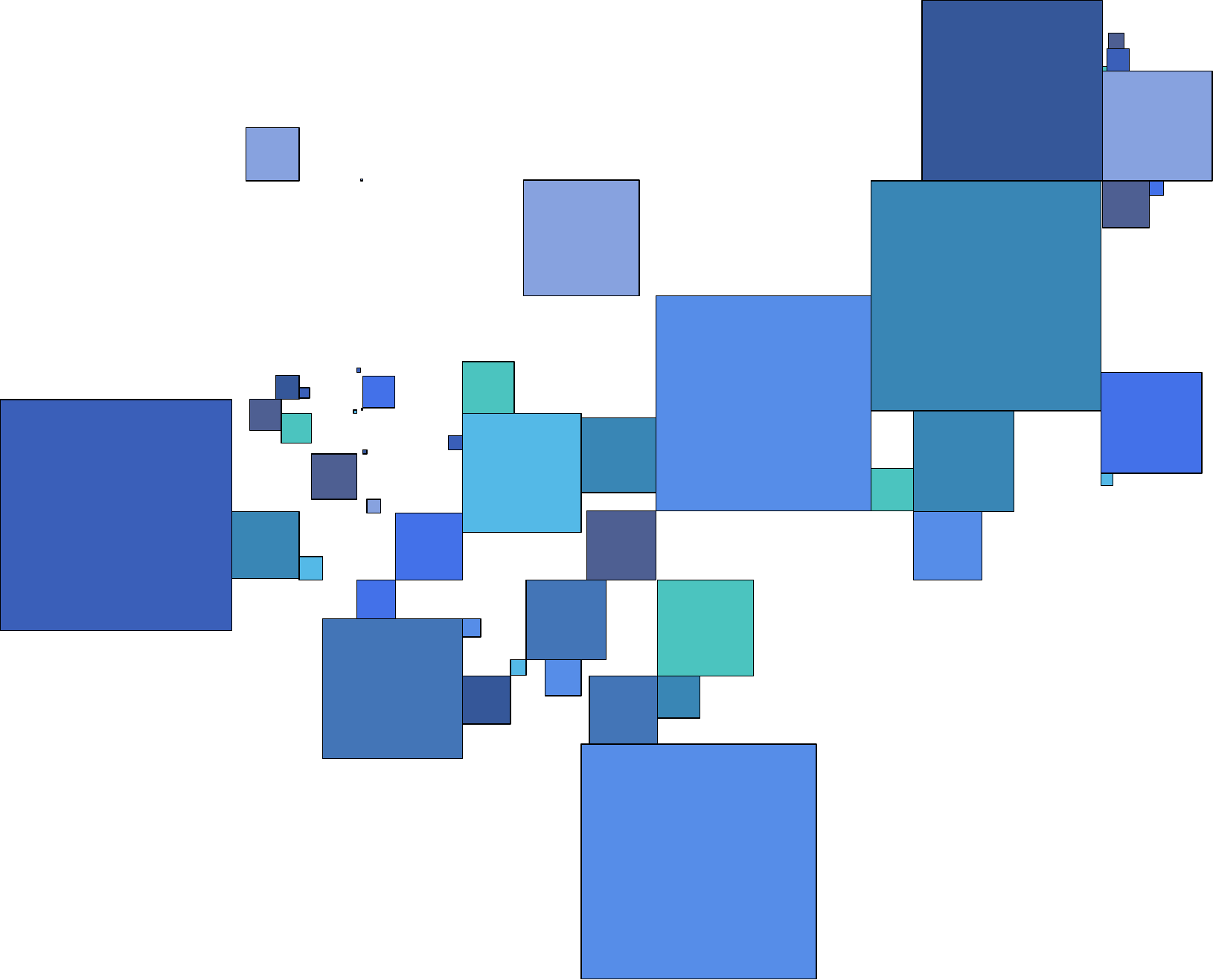}
    \hfill
    \includegraphics[width=.3\linewidth]{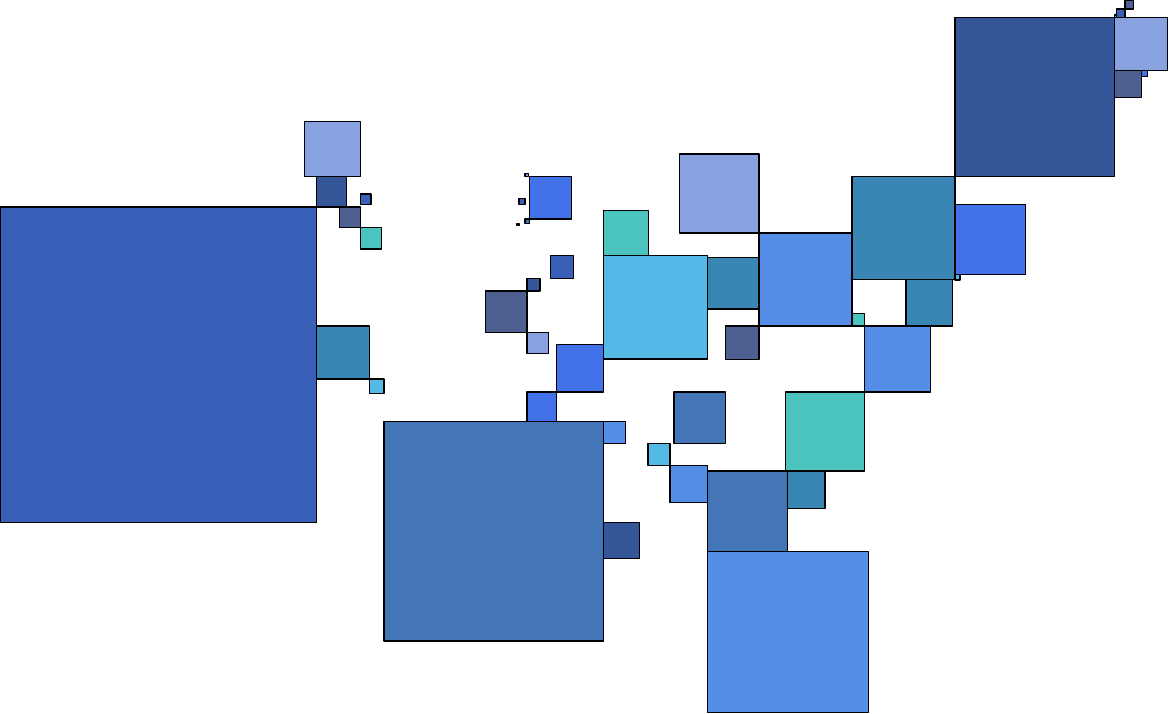}
    \hfill
    \includegraphics[width=.3\linewidth]{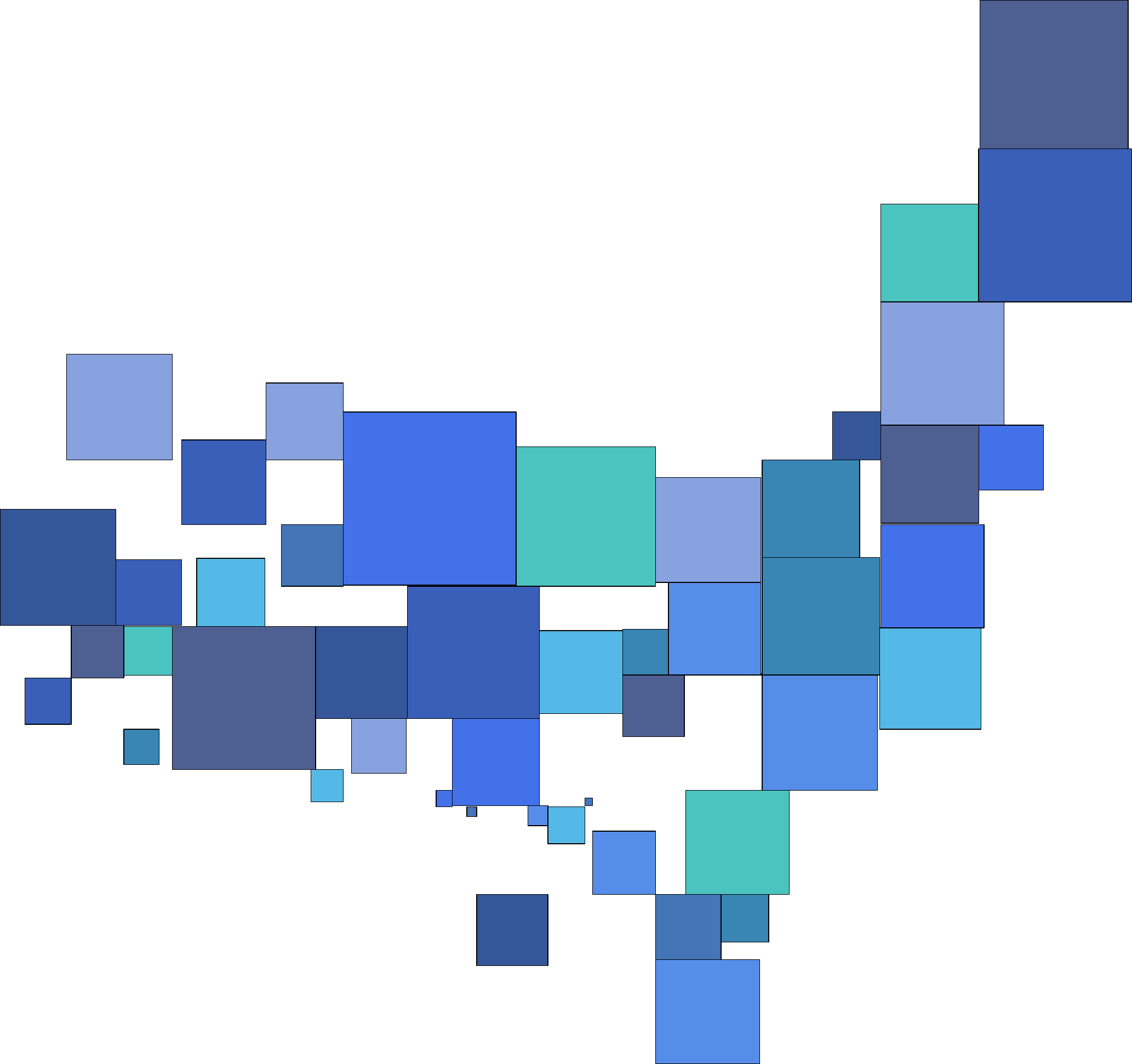}
    \caption{TOP-S-CO: minimizing distance between adjacent regions with strong separation constraints.} 
    \label{fig:USA_strong}
\end{figure}
\begin{figure}[b]
    \centering
    \includegraphics[width=.3\linewidth]{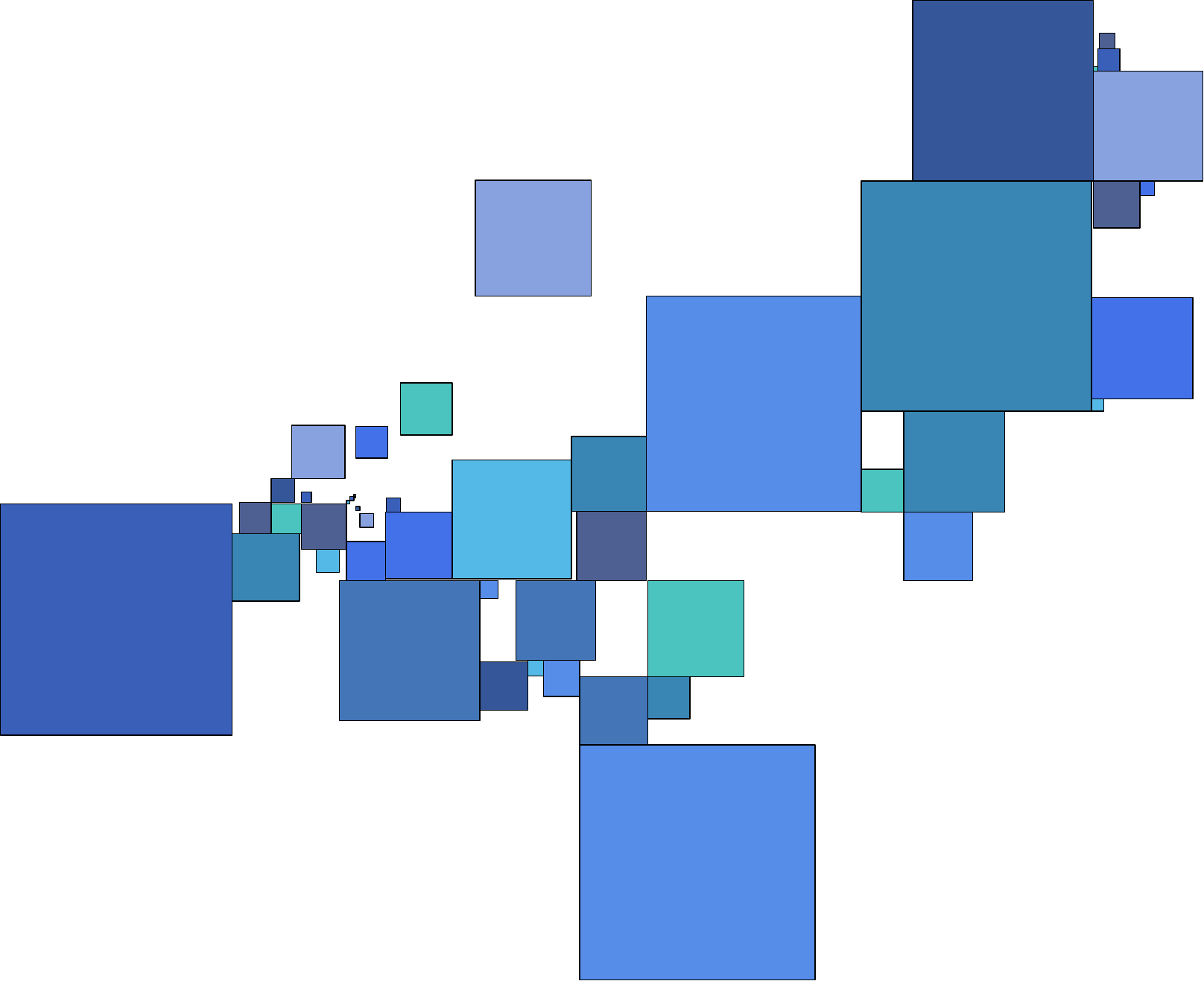}
    \hfill
    \includegraphics[width=.3\linewidth]{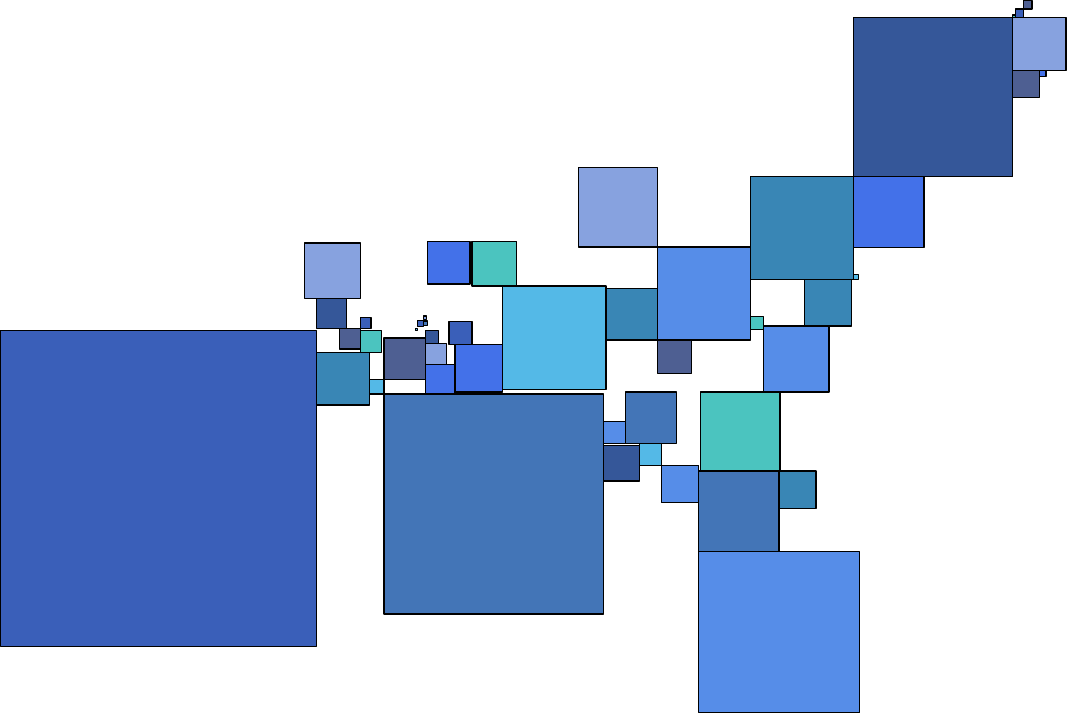}
    \hfill
    \includegraphics[width=.3\linewidth]{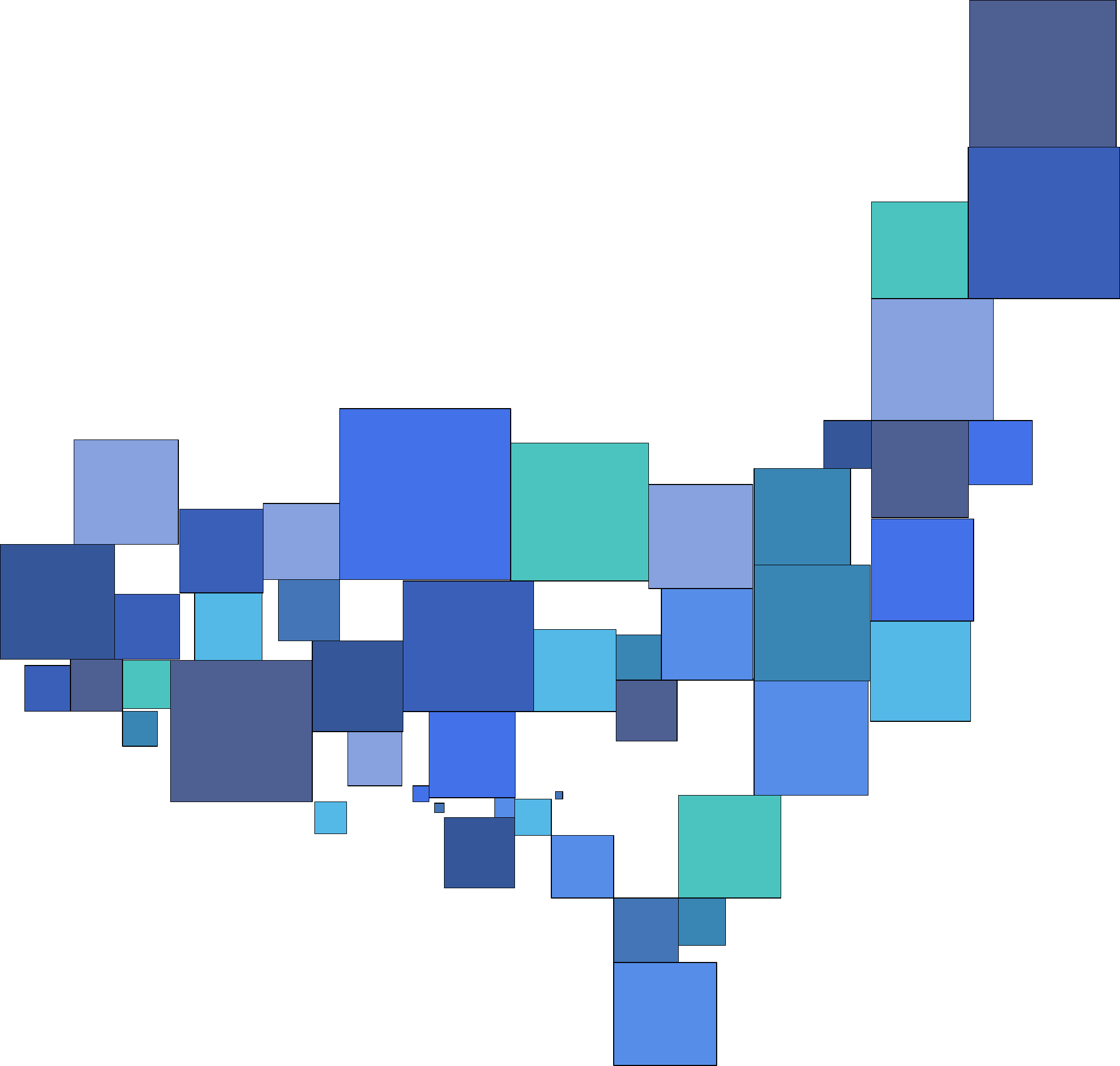}
    \caption{TOP-W-CO: minimizing distance between adjacent regions with weak separation constraints.}
    \label{fig:USA_weak}
\end{figure}
\begin{figure}[b]
    \centering
    \includegraphics[width=.3\linewidth]{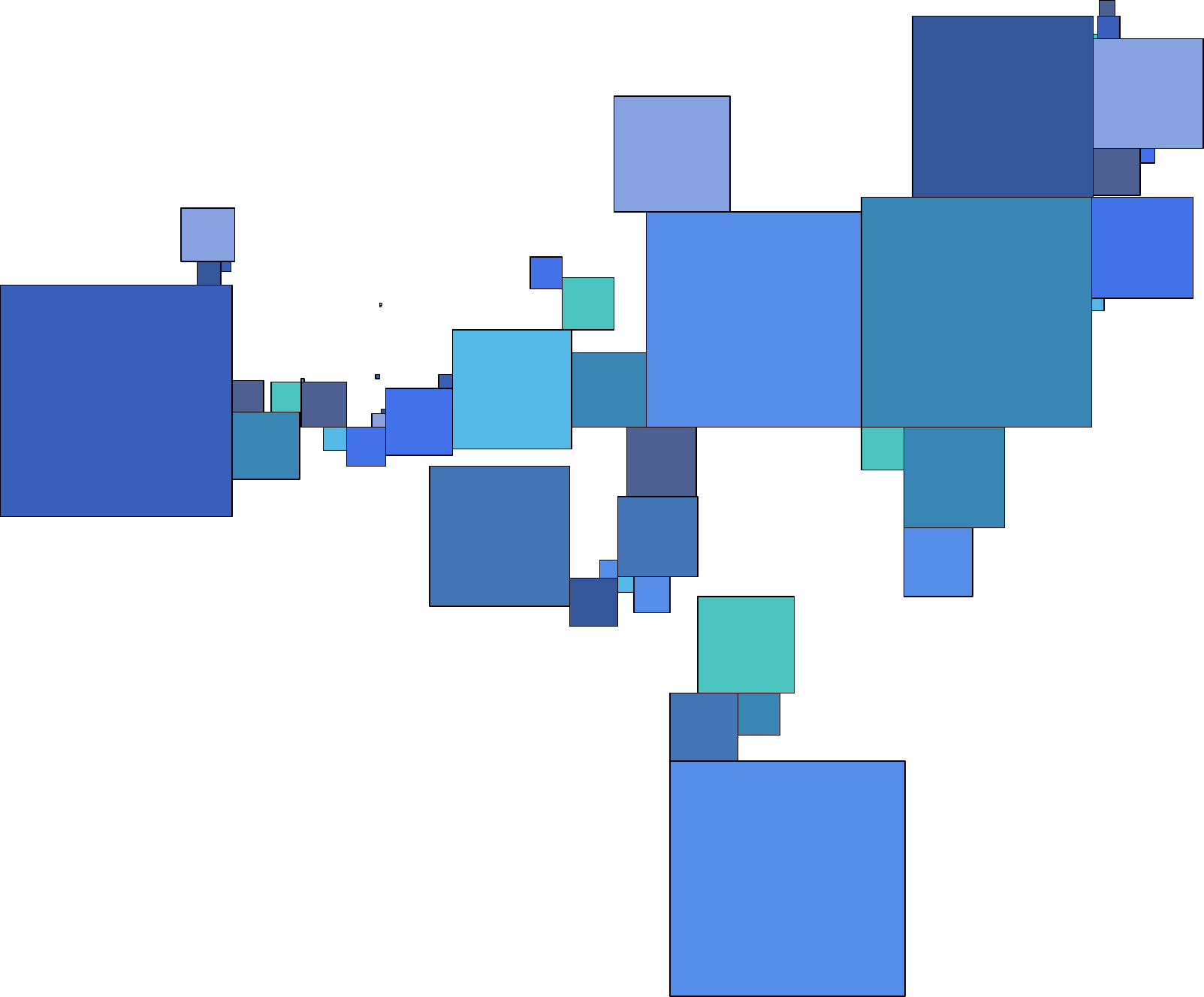}
    \hfill
    \includegraphics[width=.3\linewidth]{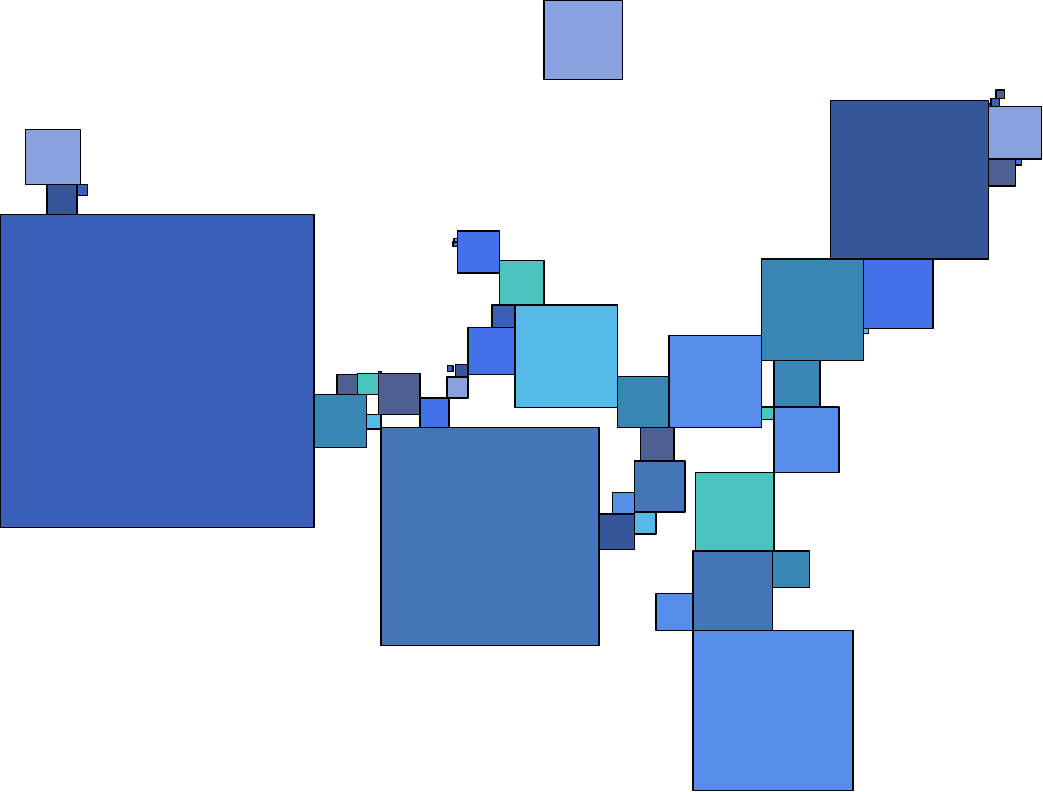}
    \hfill
    \includegraphics[width=.3\linewidth]{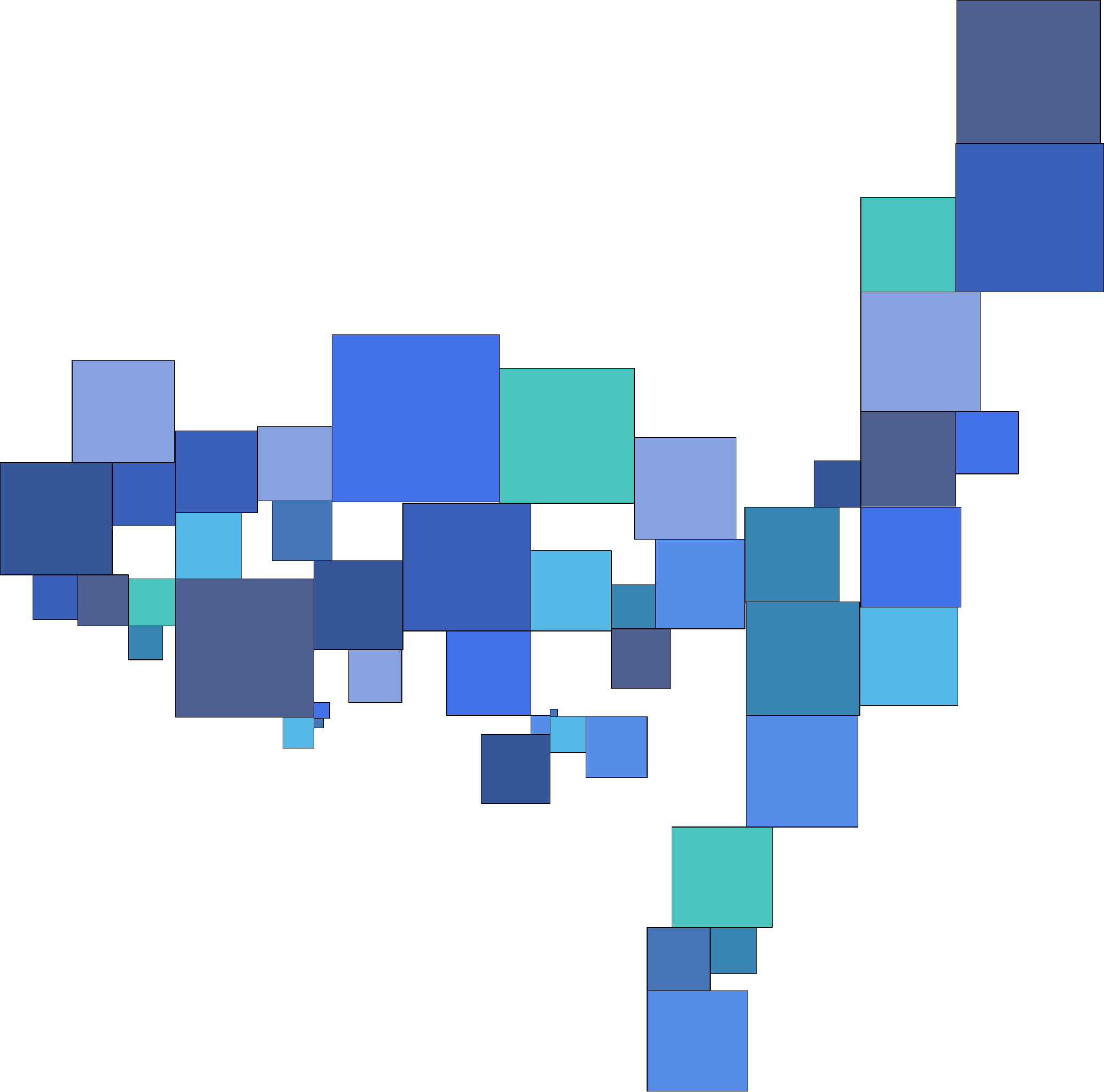}
    \caption{CNT-W-IT: minimizing lost adjacencies with weak separation constraints, using the iterative stability implementation.}
    \label{fig:USA_cnt}
\end{figure}
\begin{figure}[t]
    \centering
    \includegraphics[width=.3\linewidth]{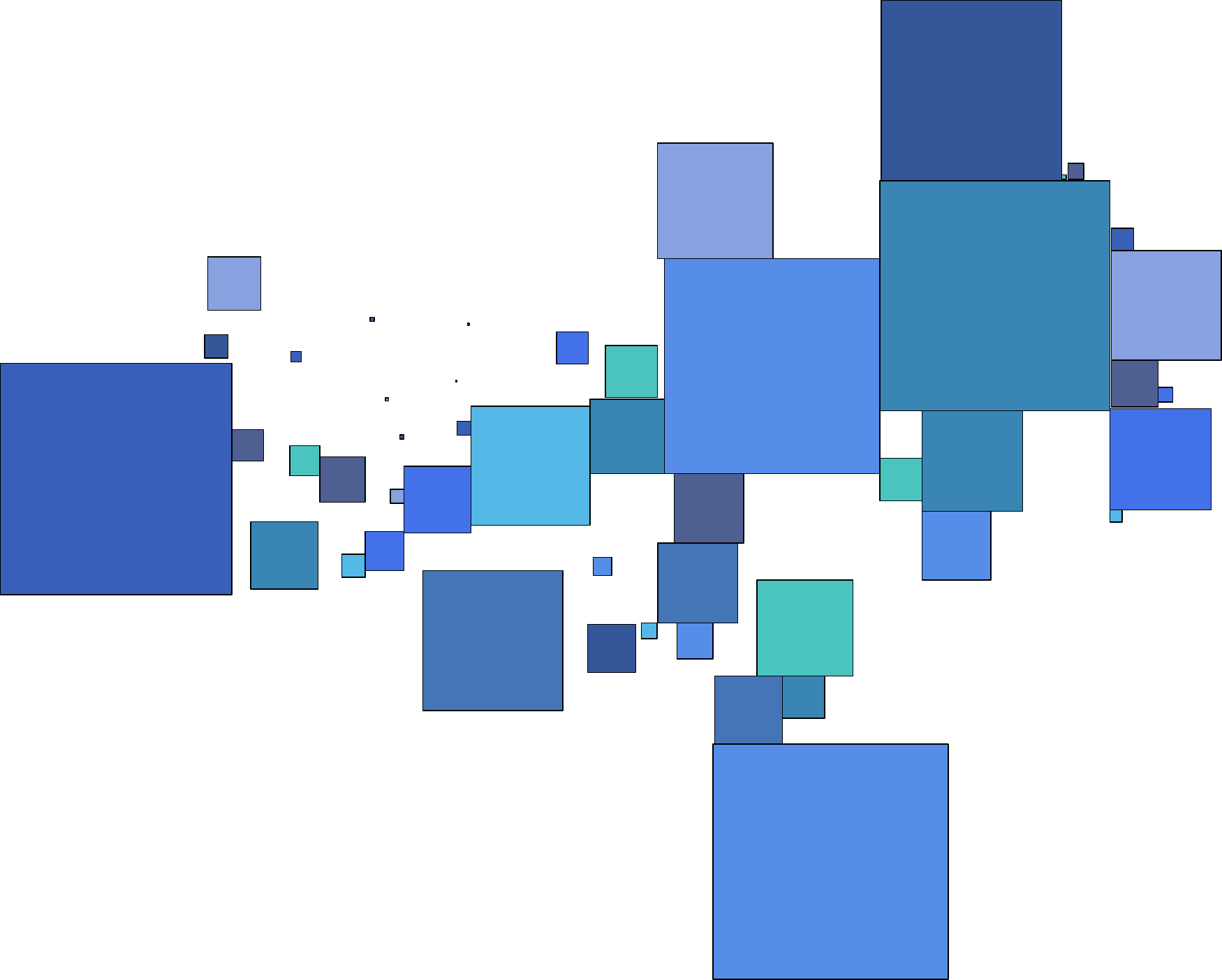}
    \hfill
    \includegraphics[width=.3\linewidth]{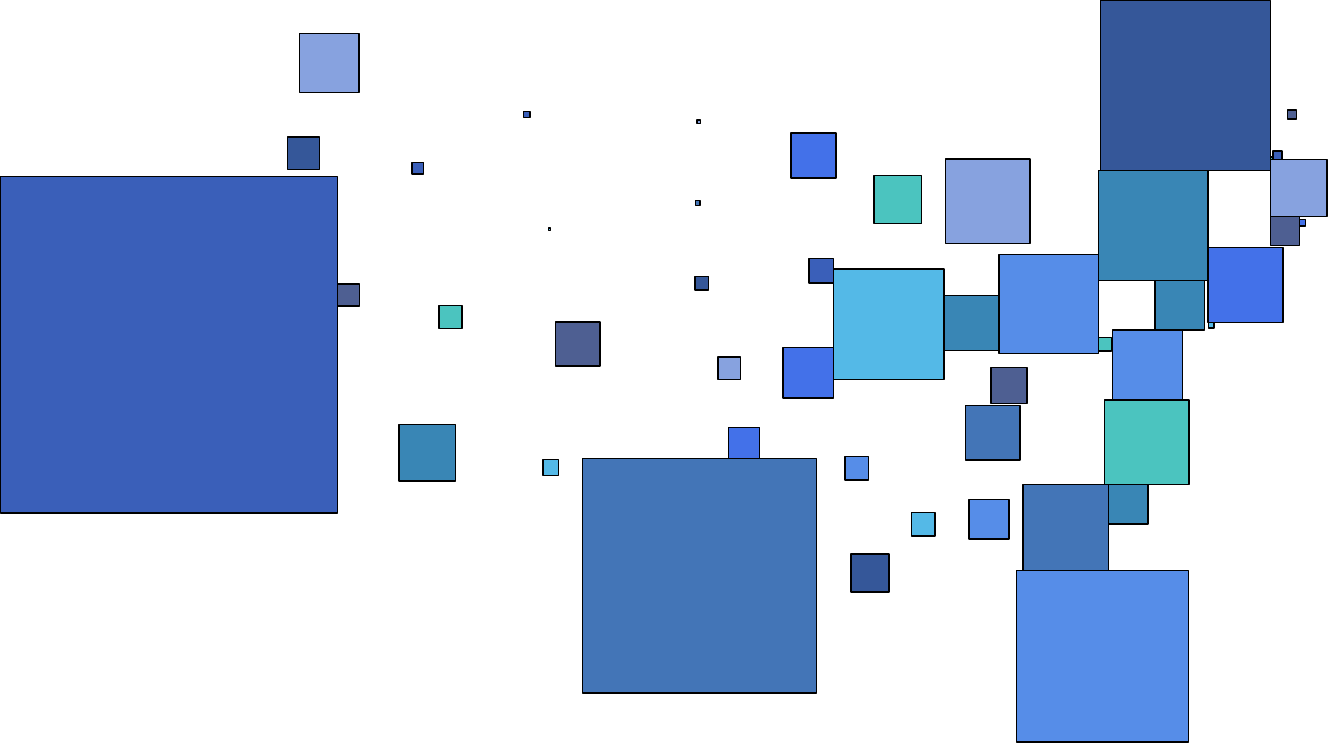}
    \hfill
    \includegraphics[width=.3\linewidth]{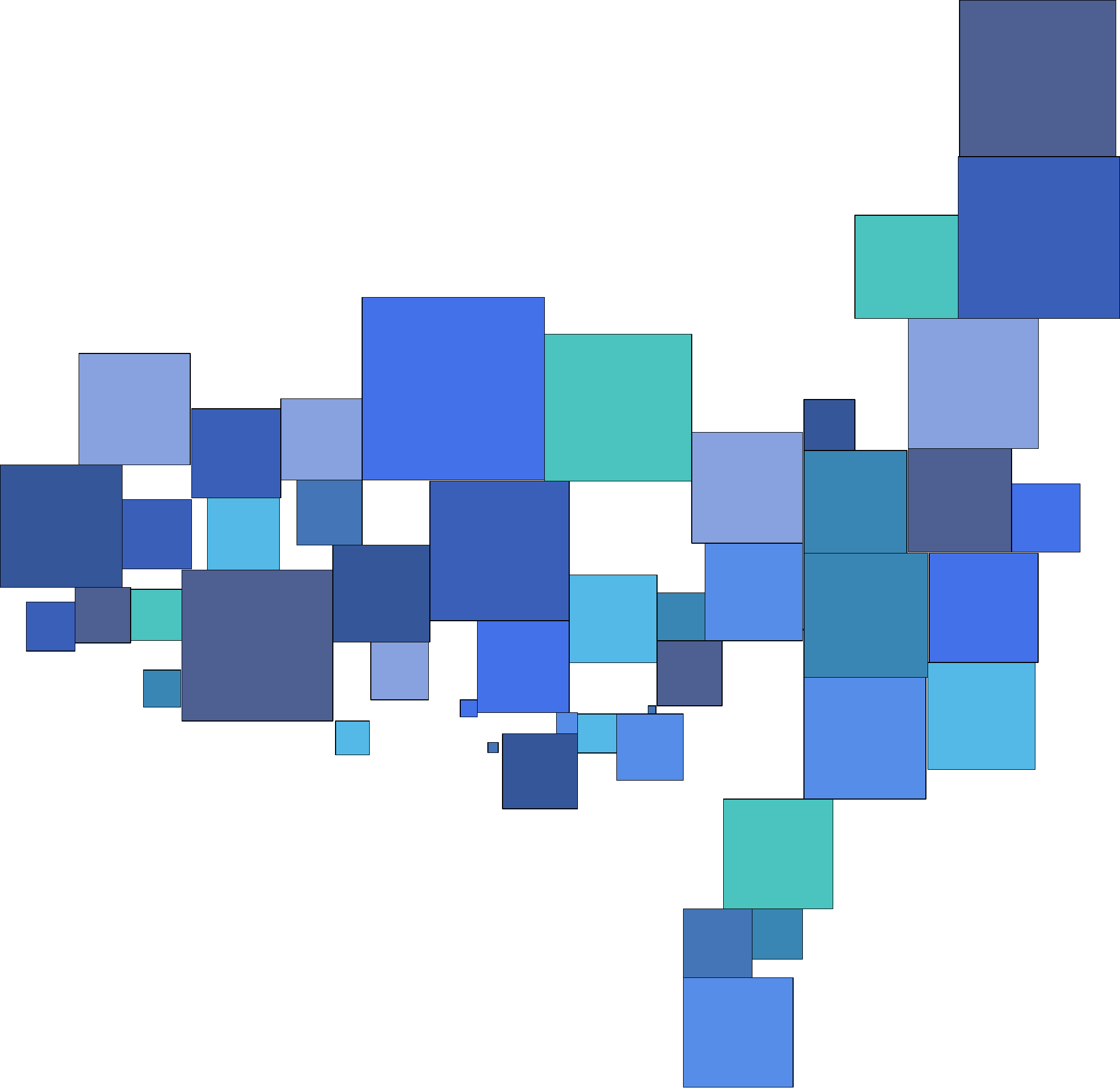}
    \caption{ORG-W-CO: minimizing distance to input position in a weight-vectors dataset and weak setting}
    \label{fig:USA_org}
\end{figure}
\begin{figure}[t]
    \centering
    \includegraphics[width=.3\linewidth]{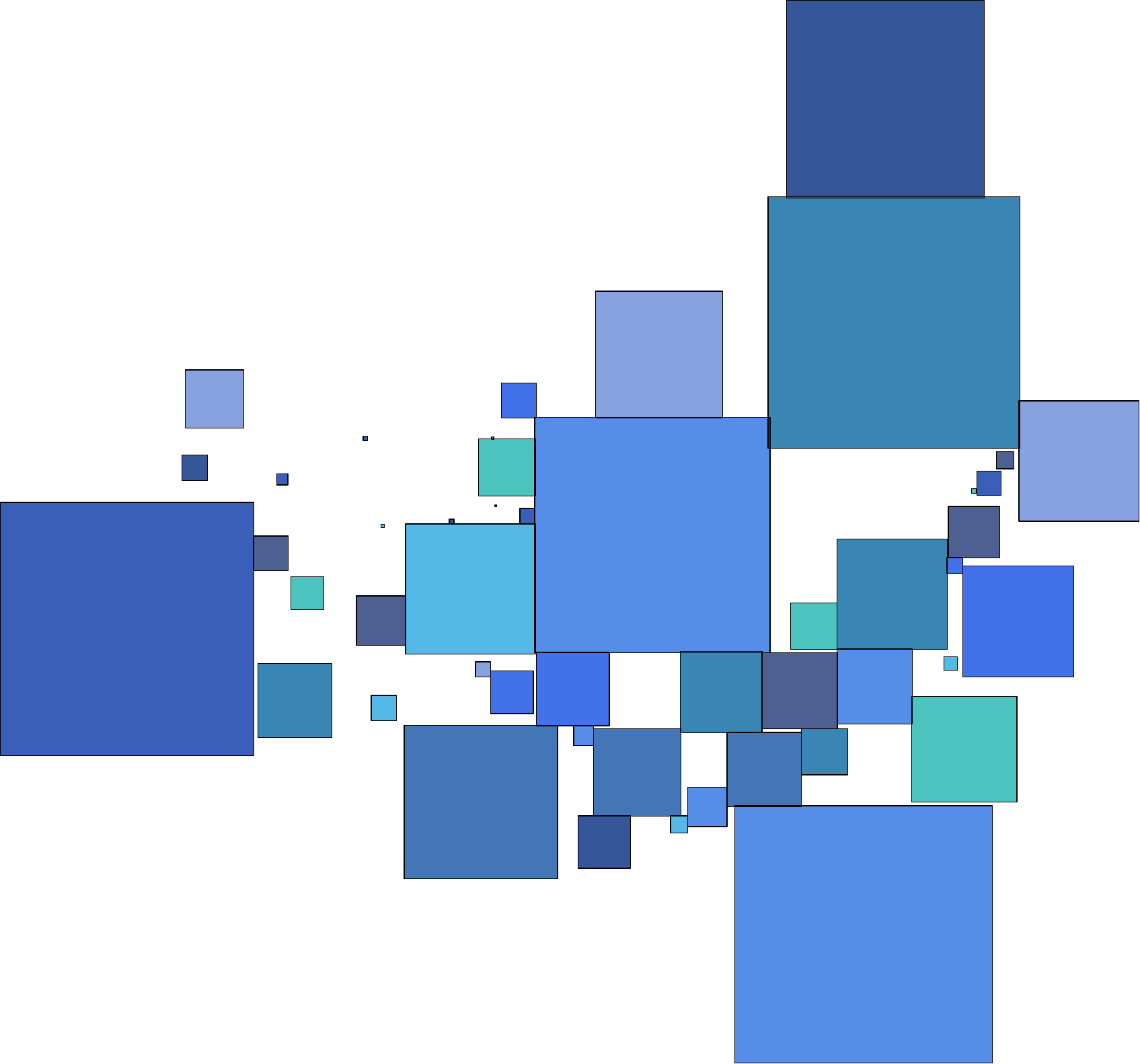}
    \hfill
    \includegraphics[width=.3\linewidth]{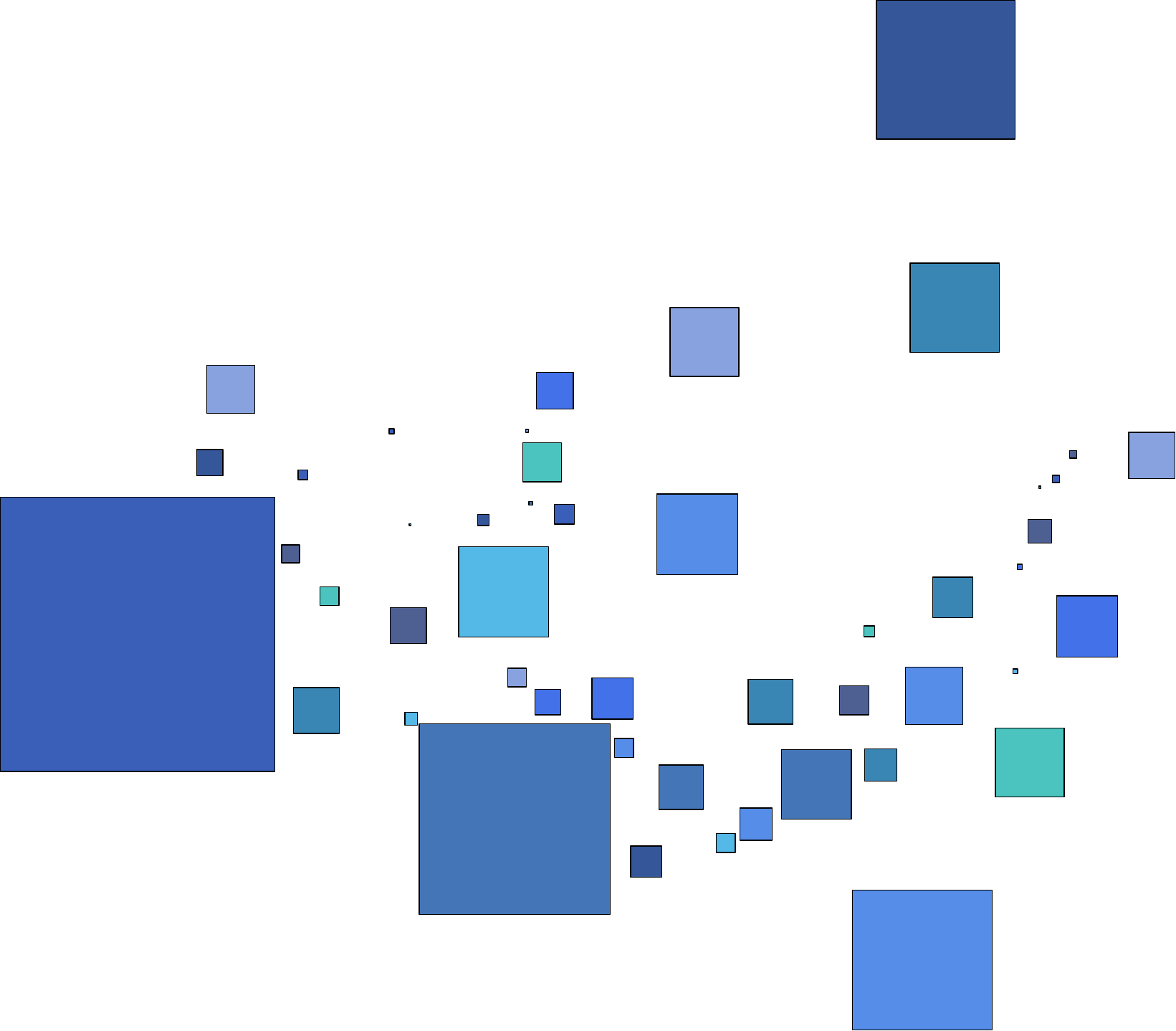}
    \hfill
    \includegraphics[width=.3\linewidth]{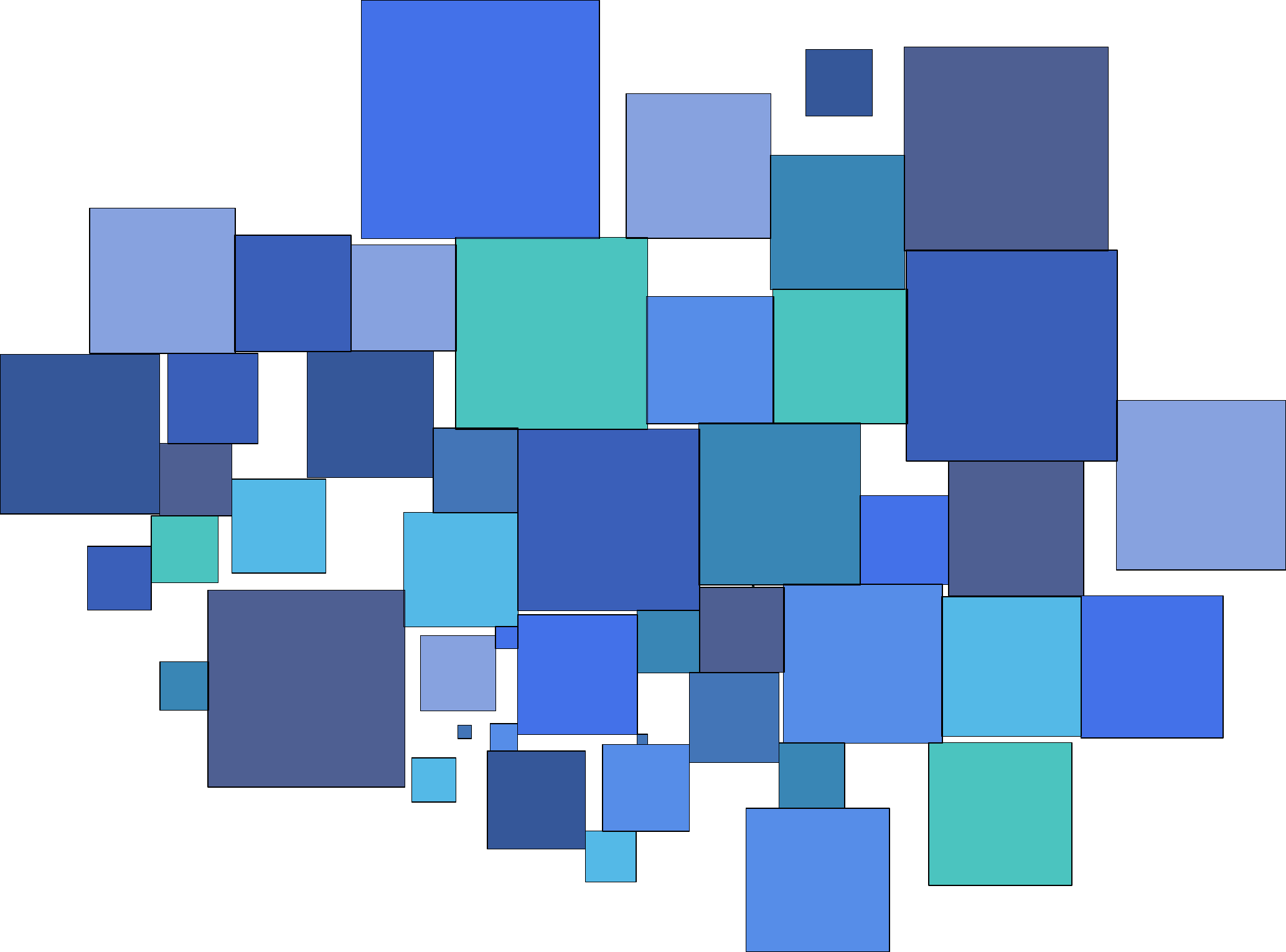}
    \caption{FRC-O-S: minimizing distances to origins through the forced-directed method.}
    \label{fig:USA_force_o}
\end{figure}
\begin{figure}[t]
    \centering
    \includegraphics[width=.3\linewidth]{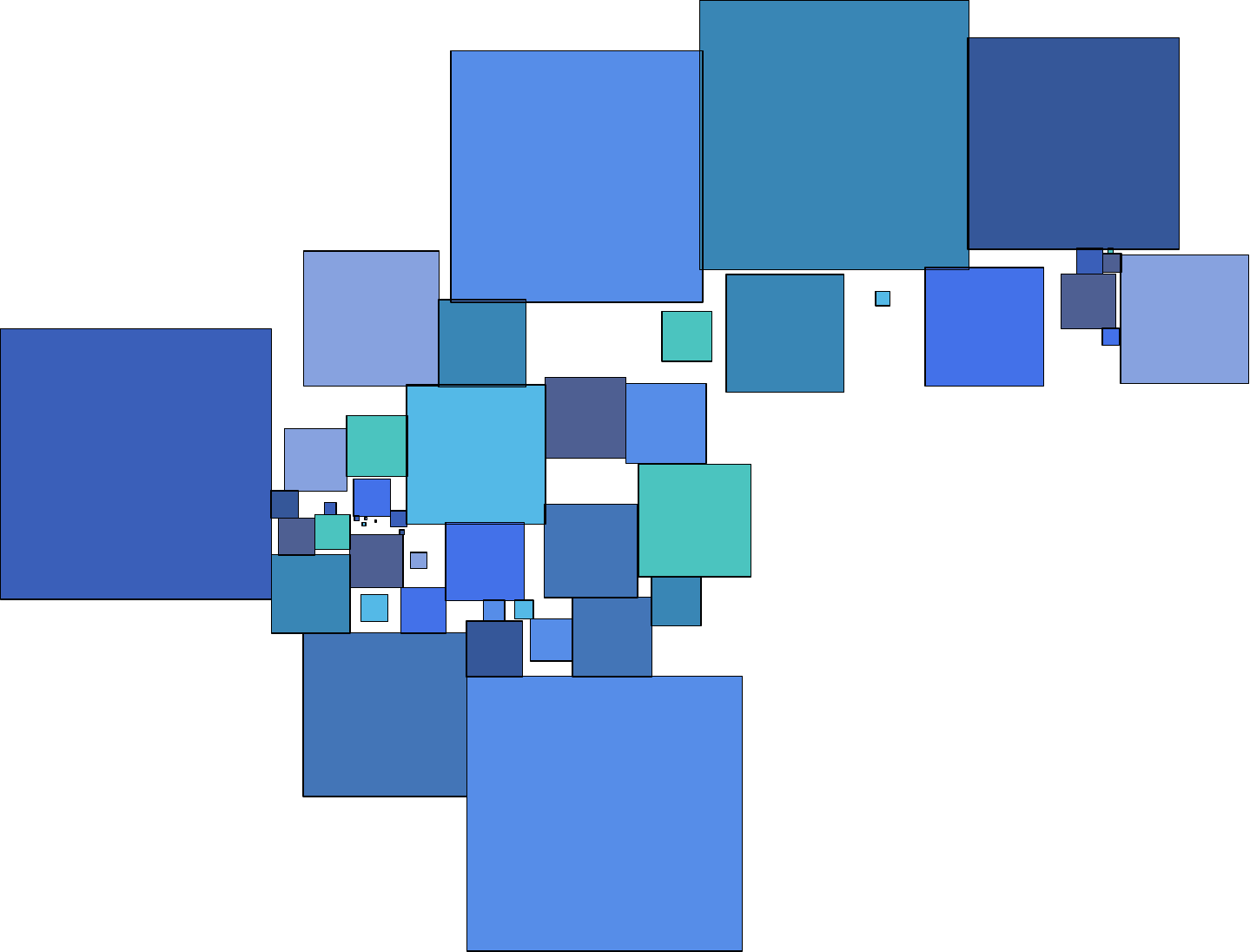}
    \hfill
    \includegraphics[width=.3\linewidth]{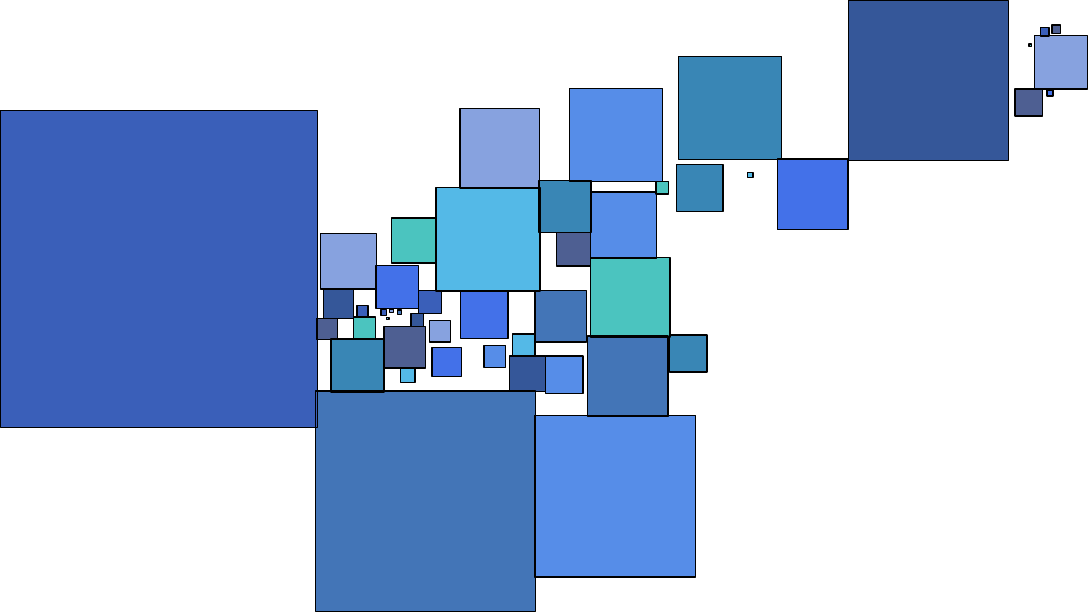}
    \hfill
    \includegraphics[width=.3\linewidth]{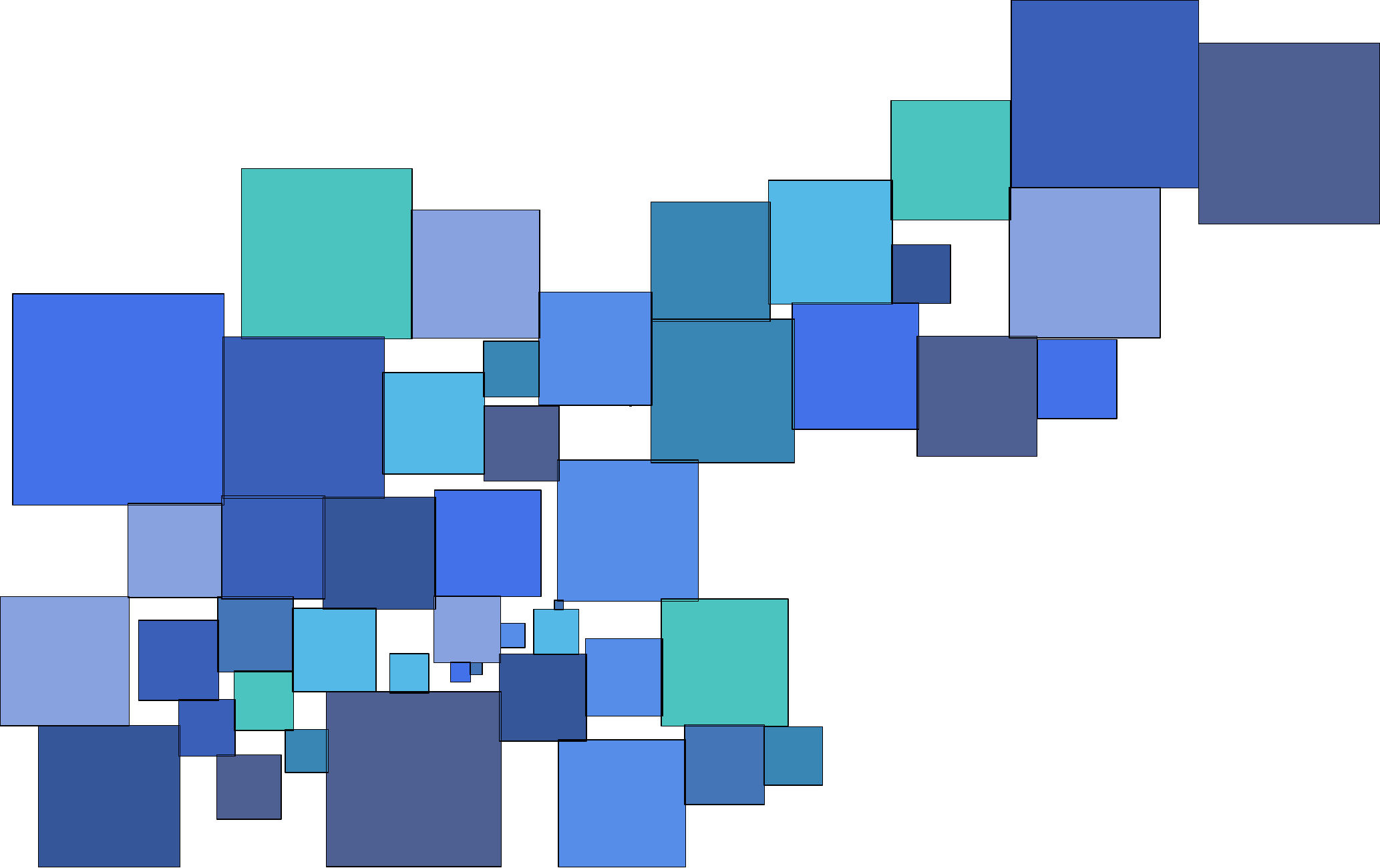}
    \caption{FRC-T-S: minimizing distances between adjacent regions through the force-directed method.}
    \label{fig:USA_force_t}
\end{figure}

\end{document}